\newtheorem{theo}{Theorem}
\newtheorem{rem}{Remark}
\newcommand*{\rom}[1]{\expandafter\@slowromancap\romannumeral #1@}
\begin{document}

\title{Distributed Filtering Design with Enhanced Resilience to Coordinated Byzantine Attacks}
\author{Ashkan Moradi, \IEEEmembership{Member, IEEE}, Vinay Chakravarthi Gogineni, \IEEEmembership{Member, IEEE}, Naveen K. D. Venkategowda, \IEEEmembership{Member, IEEE}, Stefan Werner, \IEEEmembership{Senior Member, IEEE}
\thanks{This work was supported by the Research Council of Norway. }
\thanks{Ashkan Moradi, Vinay Chakravarthi Gogineni, and Stefan Werner are with the Department of Electronic Systems,  Norwegian University of Science and Technology, Norway, E-mail: \{ashkan.moradi, vinay.gogineni, stefan.werner\}@ntnu.no. Stefan Werner is also with the Department of Signal Processing and Acoustics, Aalto University, Finland.} %
\thanks{Naveen K. D. Venkategowda is with the Department of Science and Technology, Link\"oping University, Sweden, E-mail: naveen.venkategowda@liu.se.}
}

\maketitle

\begin{abstract}
This paper proposes a Byzantine-resilient consensus-based distributed filter (BR-CDF) wherein network agents employ partial sharing of state parameters. We characterize the performance and convergence of the BR-CDF and study the impact of a coordinated data falsification attack. Our analysis shows that sharing merely a fraction of the states improves robustness against coordinated Byzantine attacks. In addition, we model the optimal attack strategy as an optimization problem where Byzantine agents design their attack covariance or the sequence of shared fractions to maximize the network-wide mean squared error (MSE). Numerical results demonstrate the accuracy of the proposed BR-CDF and its robustness against Byzantine attacks. Furthermore, the simulation results show that the influence of the covariance design is more pronounced when agents exchange larger portions of their states with neighbors. In contrast, the performance is more sensitive to the sequence of shared fractions when smaller portions are exchanged.
\end{abstract}

\begin{IEEEkeywords}
Cyber-physical systems, distributed learning, consensus-based filtering,  data falsification attacks, multiagent systems, Byzantine agent. 
\end{IEEEkeywords}

\IEEEpeerreviewmaketitle

\section{Introduction}

\IEEEPARstart{I}{n} recent years, the development of the internet of things (IoT) and machine learning techniques have led to the wide use of cyber-physical systems (CPS) in various infrastructures, such as smart grids, environment monitoring, and signal processing \cite{Humayed2017Cyber, Ding2020Secure, Farraj2017cyber, Hu2018State}. However, the high reliance of CPS on sensor cooperation makes them vulnerable to various security threats. As a result of malicious attacks, false information spreads throughout the network and threatens the integrity of the entire system~\cite{Rodriguez2015Physical}. Therefore, the study of security threats in CPS has gained considerable attention from academia and industry in the past few years \cite{Forti2017Distributed, Rahman2016Multi, Fawzi2014Secure, Moradi2022privacy, moradi2021distributed, moradi2021securingthe}. In CPSs, distributed filtering and secure estimation are becoming more prevalent due to their resilience to node failures and scalability~\cite{Liang2022Secure, Ding2019survey}.

Attack strategies influence the performance of CPSs and create complications in developing protection methods against malicious behavior~\cite{Zhao2017Resilient, Guan2017Distributed}. In CPSs, attacks can be divided into two groups: denial-of-service (DoS) attacks and integrity attacks. DoS attacks occur when the communication links between agents are blocked, and agents cannot exchange information~\cite{Chang2002DoS}. In contrast, integrity attacks occur when adversaries or malicious agents inject false information into the network~\cite{vempaty2013, moradi2023total}. A stealthy attack is also categorized as an integrity attack in which an adversary injects false information into a network without being detected.
Various studies have examined the impact of stealthy attacks on state estimation scenarios and investigated situations where adversaries design attacks to degrade the performance of the network~\cite{Mo2015performance, Deng2016False, Li2018False}.

An optimal attack design from an attacker perspective can aid in developing protection methods operating in worst-case scenarios. 
To this end, \cite{Li2018False} designs a false data-injection strategy that maximizes the trace of error covariance in a remote state estimation scenario. 
An event-based stealthy attack strategy that degrades estimation accuracy is proposed in~\cite{Cheng2019stealthy}, while~\cite{Srikantha2019novel} develops a stealthy attack strategy and its optimal defense mechanism by exploiting power grid vulnerabilities. Furthermore,  \cite{Guo2016Optimal}  proposes a linear stealthy attack strategy and its feasibility constraints. Moreover, \cite{Hu2018State} proposes an optimal attack strategy and sufficient conditions to limit the resulting estimation error. In \cite{Ni2019performance}, authors investigate the impact of reset attacks on cyber-physical systems, while \cite{Moradi2019Coordinated} designs a stealthy attack that maximizes the network-wide estimation error by jointly selecting the optimal subset of Byzantine agents and perturbation covariance matrices. In \cite{Choraria2022Design},  a false data-injection attack on a distributed CPS is proposed that enforces the local state estimates to remain within a pre-specified range.

It is also vital to analyze how countermeasures taken by agents can reduce the impact of the attack. One approach is to detect adversaries and then implement corrective measures~\cite{Kurt2018quickest,Kurt2018hybrid, Aktukmak2021Sequential}. In~\cite{Chen2018Resilient}, for example, attack detection is achieved through trusted agents that raise a flag when an adversary is detected. Alternatively, agents can safeguard information and guarantee system performance by running attack-resilient algorithms~\cite{Li2016Stochastic,Li2018Detection,Yang2019filtering,Chen2018Resilient,Chen2019Resilient,Barboni2020Detection,Shang2021Optimal}.  
In~\cite{Li2016Stochastic,Li2018Detection}, attack-resilient remote state estimators are studied, where the former proposes a stochastic detector with a random threshold to determine whether to fuse the received data, and the latter detects malicious agents using the statistical correlation between trusted agents. Using a probabilistic protector for each sensor, \cite{Yang2019filtering}  proposes a robust distributed state estimator that decides whether to use data from neighbors based on their innovation signals.
Moreover, a Byzantine-resilient distributed state estimation algorithm is proposed in~\cite{Shahrampour2020Finite} that employs an internal iteration loop within the local aggregation process to compute the trimmed-mean among neighbors.

Providing an extra procedure to protect the system from adversaries~\cite{Kurt2018quickest,Kurt2018hybrid, Aktukmak2021Sequential, Chen2018Resilient} can increase the computational load of the agent and make the algorithm undesirable for resource-constrained scenarios. 
To resolve this issue, works in~\cite{Chen2019Resilient, chen2019resilient1} reduce the weight assigned to measurements whose norm exceeds a certain threshold, limiting the impact of falsified observations,  to provide resilience against measurement attacks. Furthermore, the secure state estimation problem in~\cite{lee2020fully} is solved by a local observer that achieves robustness against malicious agents by employing the median of its local estimates. Our primary motivation for conducting this study was to develop an algorithm that provides robustness against malicious adversaries without imposing an extra computation burden on agents.

Even though multi-agent distributed systems are robust to dynamic changes in the network, they are reliant on local interactions that consume power and bandwidth~\cite{Feng2012survey}. Partial-sharing-based approaches, originally proposed in~\cite{PdLMS, PdRLS},  reduce local communication overhead by sharing only a fraction of information during each inter-agent interaction. The simplicity of implementation and efficiency of computation make partial-sharing strategies prevalent in distributed processing scenarios \cite{gogineni2022communication, Vinay_IoT}. To the best of our knowledge, partial-sharing-based approaches have not been investigated in an adversarial environment. Additionally, the lack of computationally light distributed algorithms that are robust to coordinated attacks inspired us to conduct this research.

This paper proposes a Byzantine-resilient consensus-based distributed filter (BR-CDF) where agents exchange a fraction of their state estimates at each instant. We study the convergence of the BR-CDF and characterize its performance under a coordinated data falsification attack. 
In addition, we design the optimal attack by solving an optimization problem where Byzantine agents cooperate on designing the covariance of their falsification data or the order of the information fractions they share. The Byzantine agent is a legitimate network agent that shares false information with neighbors to degrade the overall performance of the network. The numerical results validate the theoretical findings and illustrate the robustness of the  BR-CDF algorithm.

The remainder of the article is organized as follows. Section~II investigates the system model and the attack strategy, while Section~III proposes the Byzantine-resilient consensus-based distributed filter. Section~IV analyzes the stability and performance of the BR-CDF algorithm and derives its convergence conditions. The performance of the  BR-CDF algorithm is investigated under data falsification attacks in Section~V, and Section~VI develops an optimal coordinated attack strategy. Simulation results are presented in Section VII, and Section VIII concludes the article.

\noindent\textit{\textbf{Mathematical Notations}}: Scalars are denoted by lowercase letters, column vectors by bold lowercase, and matrices by bold uppercase.  Transpose and inverse operators are denoted by $(\cdot)^\text{T}$ and $(\cdot)^{-1}$, respectively.  The trace operator is denoted by $\text{tr}(\cdot)$, whereas $\otimes$ indicates the Kronecker product and $\odot$ is the Hadamard product. The $ij$th element of the matrix $\mathbf{A}$ is  denoted by $[\mathbf{A}]_{ij}$. The symbol $\boldsymbol{1}_{L}$ represents the $L\times 1$ column vector with all entries equal to one and $\mathbf{I}_L$ is the $L \times L$ identity matrix.  Matrices $\textsf{diag}(\mathbf{a})$ and $\textsf{diag}(\{\mathbf{A}_i\}_{i=1}^L)$ denote diagonal and block-diagonal matrices whose respective diagonals are the elements of vector $\mathbf{a}$ and matrices $\mathbf{A}_1, \mathbf{A}_2, \hdots, \mathbf{A}_L$. The set of all real numbers is denoted as $\mathbb{R}$ and $\mathbb{E}\{\cdot\}$ is the statistical expectation operator. A positive semidefinite matrix $\mathbf{A}$ is denoted by $\mathbf{A} \succcurlyeq 0$ and $\mathbf{A} \succcurlyeq \mathbf{B}$ indicates that $\mathbf{A}-\mathbf{B}$ is a positive semidefinite matrix. The inequality $\mathbf{A} \leq \mathbf{B}$ denotes an element-wise inequality for corresponding elements in matrices $\mathbf{A}$ and $\mathbf{B}$. The maximum and minimum eigenvalues of square matrix $\mathbf{A}$ are denoted by $\lambda_{\max}(\mathbf{A})$ and $\lambda_{\min}(\mathbf{A})$, respectively. Vector $\text{vec}(\mathbf{A})$ denotes a column vector consisting of elements of the matrix $\mathbf{A}$ and $\text{vec}^{-1}(\cdot)$ denotes  the inverse of $\text{vec}(\cdot)$ operator.

\section{System Model and Byzantine Attack Strategy}
\label{Sys_Model_DKF}
Consider a multi-agent network consisting of $L$ agents attempting to estimate a dynamic system state through their local observations. The network is modeled as an undirected graph $\mathcal{G}(\mathcal{V}, \mathcal{E})$, where $\mathcal{V}$ is the set of all agents and pairs in set $\mathcal{E}$ represent communication links between agents. The network adjacency matrix is denoted by $\mathbf{E}$ and $\mathbf{D}$ is a diagonal matrix whose diagonal entries are the degrees of corresponding agents. The neighbor set ${\cal N}_i$ contains agents connected to agent $i$ within a single hop, excluding the agent itself. 

The state-space model, characterizing the dynamics of the state vector and observation sequences at each agent $i$ and time instant $k$, is given by
\begin{equation}\label{Eq1}
	\begin{aligned}
	\mathbf{x}(k + 1) &= \mathbf{A} \mathbf{x}(k) +\mathbf{w}(k) \\
	\mathbf{y}_i(k) &= \mathbf{H}_i\mathbf{x}(k) + \mathbf{v}_i(k)
	\end{aligned}
\end{equation}
where $\mathbf{x}(k) \in \mathbb{R}^m$ is the state, $\mathbf{y}_i(k) \in \mathbb{R}^n$ is the local observation, $\mathbf{A} \in \mathbb{R}^{m \times m}$  is the state  matrix, and $\mathbf{H}_i\in \mathbb{R}^{n \times m}$ is the observation matrix. The state noise $\mathbf{w}(k)$ and observation noise  $\mathbf{v}_i(k)$ are mutually independent zero-mean Gaussian processes with covariance matrices $\mathbf{Q} \in \mathbb{R}^{m \times m}$ and $\mathbf{R}_i \in \mathbb{R}^{n \times n}$, respectively. Network agents can employ a consensus-based distributed Kalman filter (CDF) to estimate $\mathbf{x}(k)$ in a collaborative manner~\cite{olfati2009kalman}. Accordingly, the state estimate at agent $i$ is given by
\begin{equation}
\label{f2}
	\begin{aligned}
	\mathbf{\hat{x}}_i(k+1)=\mathbf{A}\mathbf{\hat{x}}_i&(k)+\mathbf{K}_i(k)\big( \mathbf{y}_i(k)-\mathbf{H}_i\mathbf{\hat{x}}_i(k)\big) \\ 
	&+ \mathbf{C}_i\textstyle \sum_{j \in \mathcal{N}_i}\big(\mathbf{\bar{x}}_j(k)-\mathbf{\hat{x}}_i(k)\big)
	\end{aligned}
\end{equation}
where  $\mathbf{C}_i \in \mathbb{R}^{m \times m}$ denotes the consensus gain, $\mathbf{K}_i(k)\in \mathbb{R}^{m \times n}$ is the Kalman gain, and  $\bar{\mathbf{x}}_j(k)$ is the received state estimates from neighboring agent $j$.

To analyze the impact of data falsification attacks on network performance, the attack model needs to be specified. For this purpose, we assume a subset of agents to be Byzantines, i.e., ${\cal B}\subseteq \mathcal{V}$, that intend to disrupt the performance of the entire network~\cite{vempaty2013}. Fig.~\ref{Net_topo} shows the dynamic of the information exchange in a network with Byzantine agents. As seen in~Fig.~\ref{Net_topo}, a regular agent shares the actual value of the state estimate  $\mathbf{\hat{x}}_j(k)$ with its neighbors. In contrast, a Byzantine agent $j \in {\cal B}$ shares a perturbed version of its state estimate with neighbors; in particular, the shared information at each agent $j$ is denoted by  
\begin{equation}\label{shared_inf}
	\mathbf{\bar{x}}_j(k) = \begin{cases}
	\mathbf{\hat{x}}_j(k)+\boldsymbol{\delta}_j(k)  & j \in \mathcal{B} \\
	\mathbf{\hat{x}}_j(k)  &  j \notin \mathcal{B} 
	\end{cases} \quad
\end{equation}
where $\boldsymbol{\delta}_j(k)$  denotes the perturbation sequence. To maximize the attack stealthiness, i.e., the ability to evade detection, the perturbation sequence is drawn from a zero-mean Gaussian distribution with covariance $\mathbf{\Sigma}_j=\mathbb{E}\{\boldsymbol{\delta}_j(k)\boldsymbol{\delta}_j^{\text{T}}(k)\}$~\cite{Chen2018a, Bai2017}. Moreover, to further degrade the network performance, Byzantines can cooperate in designing the attack strategy. The network-wide coordinated attack covariance is denoted by $\boldsymbol{\Sigma}= \mathbb{E}\{\boldsymbol{\delta}(k)\boldsymbol{\delta}^\text{T}(k)\}$ where  
$\boldsymbol{\delta}(k)=[\boldsymbol{\delta}_1^\text{T}(k),\cdots,\boldsymbol{\delta}_L^\text{T}(k)]^\text{T}$
is the network-wide attack sequence with  $\boldsymbol{\delta}_j(k)=\mathbf{0}$ if $j \notin \mathcal{B}$.

\begin{figure}[!t]
	\centering
	\includegraphics[width=.4\textwidth]{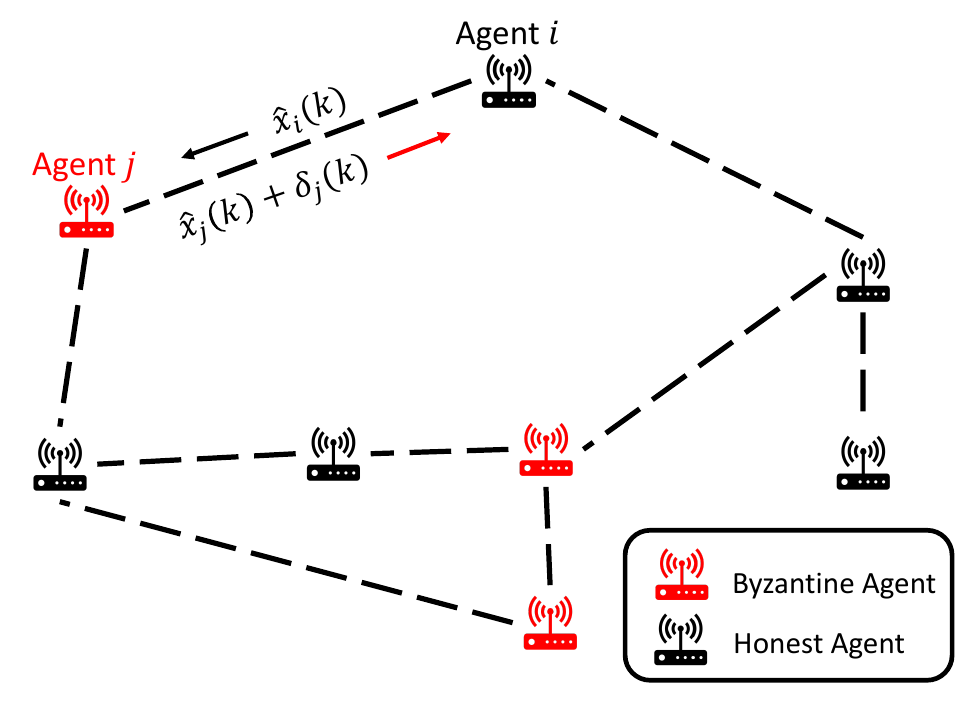}
	\caption{Network topology in the presence of Byzantine agents.}
	\label{Net_topo}
\end{figure}

\section{Byzantine-Resilient consensus-based distributed Kalman filter }

By applying partial sharing of information to state estimates in~\eqref{f2}, we reduce the information flow between agents at a given instant while maintaining the advantages of cooperation~\cite{PdLMS, PdRLS}. In particular, each agent only shares a fraction of its state estimate with neighbors rather than the entire vector (i.e., $l$ entries of $\mathbf{\hat{x}}_j(k)\in \mathbb{R}^m$, with $l \leq m$). Although partial-sharing was originally introduced to reduce inter-agent communication overhead, we show that adopting this idea in the current setting improves robustness to Byzantine attacks.

The state estimate entry selection process is performed at each agent $j$ using a selection matrix $\mathbf{S}_j(k)$ of size $m \times m$, whose main diagonal contains $l$ ones and $m-l$ zeros. The ones on the main diagonal of $\mathbf{S}_j(k)$ specify the entries of the state estimate $\mathbf{\hat{x}}_j(k)$ to be shared with neighbors. Selecting $l$ entries from $m$ can either be done stochastically or sequentially, as in \cite{PdLMS} and  \cite{PdRLS}. In this paper, we use uncoordinated partial-sharing, which is a special case of stochastic partial-sharing \cite{PdLMS}. In uncoordinated partial-sharing, each agent $j$ is initialized with random selection matrices. The selection matrix at the current time instant, i.e., $\mathbf{S}_j(k)$, can be obtained by performing $\tau$ right-circular shift operations on the main diagonal of the selection matrix used in the previous time instant. In other words, if $\mathbf{s}_j(k) \in \mathbb{R}^m$ contains the main diagonal elements of $\mathbf{S}_j(k)$ at the current instant, then $\mathbf{s}_j(k)=\text{right-circular shift} \{\mathbf{s}_j(k-1), \tau\}$. Then the selection matrix at the current instant can be constructed as $\mathbf{S}_j(k)=\textsf{diag}\{\mathbf{s}_j(k)\}$. This allows each agent $j$ to share only the initial selection matrix  $\mathbf{S}_j(k)$ with its neighbors, and maintain a record of the indices of parameters shared without needing any additional mechanisms. As a result, the frequency of each entry of the state estimate being shared is equal to $p_e=\frac{l}{m}$.

Due to partial sharing, every agent receives a fraction of the perturbed state estimate vectors from its neighbors, i.e.,  $\mathbf{S}_j(k)\mathbf{\bar{x}}_j(k)$. Thus, unlike~\eqref{shared_inf}, the received information here must be compensated to fill in the missing elements. At each agent $i$, the missing values from neighbors $(\mathbf{I}-\mathbf{S}_j(k))\mathbf{\hat{x}}_j(k)$ are replaced by  $(\mathbf{I}-\mathbf{S}_j(k))\mathbf{\hat{x}}_i(k)$. 
Subsequently, the state update at agent $i$, as in~\eqref{f2},  is modified as follows.
\begin{align}
    \mathbf{\hat{x}}_i(k+1)=&\mathbf{A}\mathbf{\hat{x}}_i(k)+\mathbf{K}_i(k)\big( \mathbf{y}_i(k)-\mathbf{H}_i\mathbf{\hat{x}}_i(k)\big) \nonumber\\ 
	&+\mathbf{C}_i\textstyle \sum_{j \in \mathcal{N}_i}\mathbf{S}_j(k)\big(\mathbf{\hat{x}}_j(k)-\mathbf{\hat{x}}_i(k)\big)\label{f5}\\
	&+\mathbf{C}_i\textstyle \sum_{j \in \mathcal{N}_i}\mathbf{S}_j(k)\boldsymbol{\delta}_j(k) \nonumber
\end{align}
At each agent $i$, the Kalman gain  is obtained by minimizing the trace of the estimation error covariance $\mathbf{P}_i(k) \triangleq \mathbb{E}\{\mathbf{e}_i(k)\mathbf{e}^\text{T}_i(k)\}$ with the estimation error evolving as 
\begin{align}
\label{Error_actual}
\mathbf{e}_i(k+1) \triangleq &  \mathbf{\hat{x}}_i(k+1) -\mathbf{x}(k+1) \nonumber\\ =& \mathbf{F}_i(k)\mathbf{e}_i(k)+ \mathbf{C}_i\textstyle \sum_{j \in \mathcal{N}_i}\mathbf{S}_j(k)\mathbf{e}_j(k)\\
&+\mathbf{K}_i(k)\mathbf{v}_i(k)-\mathbf{w}(k)+ \mathbf{C}_i\textstyle \sum_{j \in \mathcal{N}_i}\mathbf{S}_j(k)\boldsymbol{\delta}_j(k) \nonumber
\end{align}
where 
\begin{equation}\label{F_eq}
\mathbf{F}_i(k)=\mathbf{A}-\mathbf{K}_i(k)\mathbf{H}_i-\mathbf{C}_i\textstyle \sum_{j \in \mathcal{N}_i}\mathbf{S}_j(k)
\end{equation}
Accordingly, using \eqref{Error_actual}, we can obtain the evolution of the estimation error covariance at agent $i$ as follows, 
\begin{align}
\mathbf{P}_i&(k+1)	=\mathbf{F}_i(k)\mathbf{P}_i(k)\mathbf{F}_i^\text{T}(k) +\mathbf{K}_i(k)\mathbf{R}_i\mathbf{K}_i^\text{T}(k)+\mathbf{Q} \label{f6}\\
&+\Delta \mathbf{P}_i(k)+ \mathbf{C}_i \textstyle  \sum_{s \in \mathcal{N}_i}\textstyle  \sum_{p \in \mathcal{N}_i}\mathbf{S}_s(k) \boldsymbol{\Sigma}_{sp} \mathbf{S}_p^\text{T}(k)\mathbf{C}_i^\text{T} \nonumber
\end{align}
where $\boldsymbol{\Sigma}_{sp}=\mathbb{E}\{\boldsymbol{\delta}_s(k)\boldsymbol{\delta}_p^\text{T}(k)\}$ and
\begin{align}
\Delta \mathbf{P}_i(k)	=&  \mathbf{F}_i(k)\textstyle \sum_{j \in \mathcal{N}_i} \mathbf{P}_{ij}(k)\mathbf{S}_j^\text{T}(k)\mathbf{C}_i^\text{T}\nonumber \\
&+\mathbf{C}_i\textstyle  \sum_{j \in \mathcal{N}_i}\mathbf{S}_j(k)\mathbf{P}_{ji}(k)\mathbf{F}_i^\text{T}(k)\nonumber\\
&+\mathbf{C}_i \textstyle  \sum_{s \in \mathcal{N}_i}\textstyle  \sum_{p \in \mathcal{N}_i}\mathbf{S}_s(k)\mathbf{P}_{sp}(k)\mathbf{S}_p^\text{T}(k)\mathbf{C}_i^\text{T} \nonumber
\end{align}
Similarly, the cross-terms of the error covariance, i.e., $\mathbf{P}_{ij}(k) \triangleq \mathbb{E}\{\mathbf{e}_i(k)\mathbf{e}^\text{T}_j(k)\}$,  evolve as 
\begin{align}
\mathbf{P}_{ij}(k+1)	=&\mathbf{F}_i(k)\mathbf{P}_{ij}(k)\mathbf{F}_j^\text{T}(k) +\mathbf{Q} +\Delta\mathbf{P}_{ij}(k) \nonumber \\
&+ \mathbf{C}_i \textstyle  \sum_{s \in \mathcal{N}_i}\textstyle  \sum_{p \in \mathcal{N}_j}\mathbf{S}_s(k) \boldsymbol{\Sigma}_{sp} \mathbf{S}_p^\text{T}(k)\mathbf{C}_j^\text{T}\nonumber
\end{align}
with 
\begin{equation*}
\begin{aligned}
\Delta\mathbf{P}_{ij}(k)=& \mathbf{F}_i(k)\textstyle \sum_{p \in \mathcal{N}_j} \mathbf{P}_{ip}(k)\mathbf{S}_p^\text{T}(k)\mathbf{C}_j^\text{T} \\
&+\mathbf{C}_i\textstyle  \sum_{s \in \mathcal{N}_i}\mathbf{S}_s(k)\mathbf{P}_{sj}(k)\mathbf{F}_j^\text{T}(k)\\
&+\mathbf{C}_i \textstyle  \sum_{s \in \mathcal{N}_i}\textstyle  \sum_{p \in \mathcal{N}_j}\mathbf{S}_s(k)\mathbf{P}_{sp}(k)\mathbf{S}_p^\text{T}(k)\mathbf{C}_j^\text{T}
\end{aligned}
\end{equation*}
Differentiating the trace of \eqref{f6} with respect to $\mathbf{K}_i(k)$ gives
\begin{align}
\mathbf{K}_i^*(k)&=\bigg( \left(\mathbf{A}-\mathbf{C}_i\textstyle \sum_{j \in \mathcal{N}_i}\mathbf{S}_j(k)\right)\mathbf{P}_i(k)\nonumber \\
&\quad \quad \quad+ \mathbf{C}_i\textstyle  \sum_{j \in \mathcal{N}_i}\mathbf{S}_j(k)\mathbf{P}_{ji}(k)\bigg)\mathbf{H}_i^\text{T} \mathbf{M}_i^{-1}(k) \nonumber
\end{align}
with
$\mathbf{M}_i(k)=\mathbf{R}_i+\mathbf{H}_i\mathbf{P}_i(k)\mathbf{H}_i^\text{T}$.

The distributed Kalman filter based on partial sharing is summarized by \eqref{f5}--\eqref{f6}. We see that the local covariance update in~\eqref{f6} requires access to cross-term covariance matrices of neighbors, resulting in considerable communication overhead.
To reduce the communication overhead, for sufficiently small gain values, i.e., $\mathbf{C}_i$, we can ignore the term  $\Delta\mathbf{P}_i(k)$ in~\eqref{f6} and the last term of $\mathbf{F}_i(k)$ in~\eqref{F_eq}~\cite{olfati2009kalman}, i.e., we have 
\begin{align}\label{Pi_BRDKF}
\mathbf{P}_i(k+1)=&\mathbf{\hat{F}}_i(k) \mathbf{P}_i(k)\mathbf{\hat{F}}_i^\text{T}(k)+\mathbf{K}_i(k)\mathbf{R}_i\mathbf{K}_i^\text{T}(k) +\mathbf{Q} \\
&+\mathbf{C}_i \textstyle  \sum_{s \in \mathcal{N}_i}\textstyle  \sum_{p \in \mathcal{N}_i}\mathbf{S}_s(k) \boldsymbol{\Sigma}_{sp} \mathbf{S}_p^\text{T}(k)\mathbf{C}_i^\text{T}\nonumber 
\end{align}
with $\mathbf{\hat{F}}_i(k)=\mathbf{A}-\mathbf{K}_i(k)\mathbf{H}_i$. Accordingly, the optimal Kalman gain reduces to
\begin{equation}
\label{Kal_gain}
\mathbf{K}_i(k)= \mathbf{A}\mathbf{P}_i(k)\mathbf{H}_i^\text{T} \left(\mathbf{R}_i+\mathbf{H}_i\mathbf{P}_i(k)\mathbf{H}_i^\text{T}\right)^{-1}
\end{equation}
 With the above approximations, we obtain a distributed consensus-based Kalman filter, albeit suboptimal~\cite{olfati2009kalman, olfati2007distributed}, that only requires local variables in the error covariance update at each agent. It is worth noting that the last term in~\eqref{Pi_BRDKF} is only used to characterize the impact of the perturbation covariances, and since the attack is stealthy from the perspective of an agent, it is excluded from the filtering algorithm.
 As a result, in addition to the initial selection matrix $\mathbf{S}_j(0)$, each agent $j$  shares a fraction of the perturbed state estimate, i.e.,  $\mathbf{S}_j(k)\mathbf{\bar{x}}_j(k)$, with its neighbors at each instant.
The proposed  BR-CDF algorithm with reduced communication is summarized in Algorithm~\ref{Alg1}.  We shall see that the  BR-CDF in Algorithm~\ref{Alg1} performs closely to the solution that shares all necessary variables.
 
\begin{algorithm}[t]
\caption{ BR-CDF Algorithm}
\label{Alg1}
\begin{algorithmic}
\renewcommand{\algorithmicrequire}{ For}
\REQUIRE each agent $i \in \mathcal{N}$ 
\renewcommand{\algorithmicrequire}{\textbf{Initialize:}}
\REQUIRE   $\mathbf{\hat{x}}_{i}(0)=\mathbf{x}_0$, $\mathbf{P}_{i}(0)=\mathbf{P}_0$,  $\mathbf{S}_{i}(0)=\textsf{diag}(\mathbf{s}_i(0))$, and share $\mathbf{S}_i(0)$ with $j \in \mathcal{N}_i$
\FORALL{$k > 0$}
\STATE For all $j \in \mathcal{N}_i$ receive  $ \mathbf{S}_j(k)\mathbf{\bar{x}}_j(k)$ 
\STATE 
$\mathbf{K}_i(k)= \mathbf{A}\mathbf{P}_i(k)\mathbf{H}_i^\text{T} \left(\mathbf{R}_i+\mathbf{H}_i\mathbf{P}_i(k)\mathbf{H}_i^\text{T}\right)^{-1}$
\STATE \normalsize{Update the state estimate}
\begin{align}
\hspace{-2mm}\mathbf{\hat{x}}_i(k+1)=&\mathbf{A}\mathbf{\hat{x}}_i(k)+\mathbf{K}_i(k)\big( \mathbf{y}_i(k)-\mathbf{H}_i\mathbf{\hat{x}}_i(k)\big) \label{state_upd}\\ 
	&+\mathbf{C}_i\textstyle \sum_{j \in \mathcal{N}_i}\big(\mathbf{S}_j(k)\mathbf{\bar{x}}_j(k)-\mathbf{S}_j(k)\mathbf{\hat{x}}_i(k)\big)\nonumber
\end{align}
\STATE $\mathbf{\hat{F}}_i(k)=\mathbf{A}-\mathbf{K}_i(k)\mathbf{H}_i$
\STATE Update the local covariance
\begin{equation}\label{Pi_algo}
\mathbf{P}_i(k+1)=\mathbf{\hat{F}}_i(k) \mathbf{P}_i(k)\mathbf{\hat{F}}_i^\text{T}(k)+\mathbf{K}_i(k)\mathbf{R}_i\mathbf{K}_i^\text{T}(k) +\mathbf{Q} 
\end{equation}
\STATE $\mathbf{s}_i(k+1)=\text{right-circular shift} \{\mathbf{s}_i(k), \tau\}$
\STATE $\mathbf{S}_i(k+1)=\textsf{diag}\left(\mathbf{s}_i(k+1)\right)$
\STATE Share $ \mathbf{S}_{i}(k+1)\mathbf{\bar{x}}_i(k+1)$ with all $j \in \mathcal{N}_i$
\ENDFOR
\end{algorithmic} 
\end{algorithm}

\section{Stability and Performance Analysis}
This section provides a detailed stability analysis of the  BR-CDF algorithm. For this purpose, we make the following assumption:\newline
\textbf{Assumption~1}: 
The selection matrix $\mathbf{S}_{i}(k)$ for all $i\in\mathcal{N}$ is independent of any other data, and the selection matrices $\mathbf{S}_{i}(k)$ and $\mathbf{S}_{j}(s)$ for all $i \neq j$ and $k \neq s$ are independent.

\noindent Our main result on the stability of the proposed  BR-CDF algorithm is summarized by the following theorem.

\begin{theo}
\label{Th1}
Consider the BR-CDF in Algorithm~\ref{Alg1} with consensus gain $\mathbf{C}_i=\gamma \mathbf{A}\mathbf{\bar{M}}_i^{-1}(k)$, where $\mathbf{\bar{M}}_i(k)=\mathbf{P}_i^{-1}(k)+\mathbf{H}_i^{\text{T}}\mathbf{R}_i^{-1}\mathbf{H}_i$. Then, for a sufficiently small $\gamma$, the error dynamics of
the BR-CDF is globally asymptotically stable and all local estimators
asymptotically reach a consensus on state estimates, i.e., $\mathbf{\hat{x}}_1(k)=\mathbf{\hat{x}}_2(k)=\cdots=\mathbf{\hat{x}}_L(k)=\mathbf{x}(k)$.
\end{theo}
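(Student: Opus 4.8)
The plan is to analyze the evolution of the \emph{mean} error and to show that its transition operator is Schur stable for small $\gamma$, which simultaneously yields asymptotic stability and consensus. First I would stack the local errors into $\mathbf{e}(k)=[\mathbf{e}_1^\text{T}(k),\dots,\mathbf{e}_L^\text{T}(k)]^\text{T}$ and rewrite \eqref{Error_actual} by substituting $\mathbf{F}_i(k)=\mathbf{\hat{F}}_i(k)-\mathbf{C}_i\sum_{j\in\mathcal{N}_i}\mathbf{S}_j(k)$, so that the consensus terms combine into $\mathbf{C}_i\sum_{j\in\mathcal{N}_i}\mathbf{S}_j(k)(\mathbf{e}_j(k)-\mathbf{e}_i(k))$. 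Taking expectation and invoking Assumption~1 replaces every $\mathbf{S}_j(k)$ by $\mathbb{E}\{\mathbf{S}_j(k)\}=p_e\mathbf{I}_m$ and annihilates the zero-mean observation noise, process noise, and attack terms. With $\bar{\mathbf{e}}(k)=\mathbb{E}\{\mathbf{e}(k)\}$ this gives the closed recursion $\bar{\mathbf{e}}(k+1)=\boldsymbol{\Psi}(k)\bar{\mathbf{e}}(k)$, where $\boldsymbol{\Psi}(k)=\mathbf{\hat{F}}(k)-p_e\mathbf{C}(k)(\mathbf{L}\otimes\mathbf{I}_m)$, $\mathbf{\hat{F}}(k)=\textsf{diag}(\{\mathbf{\hat{F}}_i(k)\}_{i=1}^L)$, $\mathbf{C}(k)=\textsf{diag}(\{\mathbf{C}_i\}_{i=1}^L)$, and $\mathbf{L}=\mathbf{D}-\mathbf{E}$ is the graph Laplacian.

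Next I would establish the $\gamma=0$ baseline. Since $\mathbf{P}_i(k)$ in \eqref{Pi_algo} is the standard Riccati recursion of a local Kalman filter and is decoupled from the selection process, under the usual detectability of $(\mathbf{A},\mathbf{H}_i)$ and stabilizability of $(\mathbf{A},\mathbf{Q}^{1/2})$ it converges to a unique $\mathbf{P}_i\succcurlyeq 0$, so that $\mathbf{K}_i(k)\to\mathbf{K}_i$ and $\mathbf{\hat{F}}_i(k)\to\mathbf{\hat{F}}_i=\mathbf{A}-\mathbf{K}_i\mathbf{H}_i$ with $\rho(\mathbf{\hat{F}}_i)<1$. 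Consequently $\boldsymbol{\Psi}(k)$ tends to the constant matrix $\boldsymbol{\Psi}=\textsf{diag}(\{\mathbf{\hat{F}}_i\}_{i=1}^L)-p_e\gamma\,\textsf{diag}(\{\mathbf{A}\mathbf{\bar{M}}_i^{-1}\}_{i=1}^L)(\mathbf{L}\otimes\mathbf{I}_m)$, and at $\gamma=0$ this limit is block-diagonal and Schur stable with $\rho(\boldsymbol{\Psi})=\max_i\rho(\mathbf{\hat{F}}_i)<1$.

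The core of the argument is then a perturbation step. The consensus coupling enters $\boldsymbol{\Psi}$ only through the $O(\gamma)$ term $p_e\gamma\,\textsf{diag}(\{\mathbf{A}\mathbf{\bar{M}}_i^{-1}\}_{i=1}^L)(\mathbf{L}\otimes\mathbf{I}_m)$, whose norm is uniformly bounded because $\mathbf{\bar{M}}_i^{-1}\preccurlyeq\mathbf{P}_i$ is bounded. Since eigenvalues depend continuously on the matrix entries, there is a threshold $\bar{\gamma}>0$ such that $\rho(\boldsymbol{\Psi})<1$ for all $\gamma\in(0,\bar{\gamma})$; a Lyapunov/submultiplicativity estimate then transfers Schur stability of the constant limit $\boldsymbol{\Psi}$ to the time-varying product $\prod_{t}\boldsymbol{\Psi}(t)$, exploiting that $\boldsymbol{\Psi}(k)\to\boldsymbol{\Psi}$ exponentially fast, a rate inherited from the exponential convergence of the Riccati recursion. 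Hence $\bar{\mathbf{e}}(k)\to\mathbf{0}$, and by linearity the convergence is global.

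Finally, $\bar{\mathbf{e}}_i(k)\to\mathbf{0}$ for every $i$ means each estimator is asymptotically unbiased, $\mathbb{E}\{\mathbf{\hat{x}}_i(k)\}\to\mathbb{E}\{\mathbf{x}(k)\}$, and that all local estimates agree in the limit, giving the claimed consensus $\mathbf{\hat{x}}_1(k)=\cdots=\mathbf{\hat{x}}_L(k)=\mathbf{x}(k)$. I expect the main obstacle to be precisely that transfer step: Schur stability of the constant limit $\boldsymbol{\Psi}$ does not by itself guarantee stability of the genuinely time-varying recursion, so I would combine a common quadratic Lyapunov function for $\boldsymbol{\Psi}$ with the exponential decay of $\|\boldsymbol{\Psi}(k)-\boldsymbol{\Psi}\|$ and, crucially, argue that the threshold $\bar{\gamma}$ can be taken independent of $k$. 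Pinning down this uniformity, rather than the algebra of stacking and taking expectations, is where the real work lies.
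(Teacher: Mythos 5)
There is a genuine gap, and it lies exactly where you relocated the difficulty: your entire argument runs on the \emph{first moment} of the error, and first-moment convergence cannot deliver the theorem's conclusion. Once you take expectation and replace every $\mathbf{S}_j(k)$ by $p_e\mathbf{I}$, what you prove is $\mathbb{E}\{\mathbf{e}(k)\}\to\mathbf{0}$, i.e., asymptotic unbiasedness. But even in the noise-free setting the error $\mathbf{e}(k)$ remains a random process, because it is driven by the random selection matrices through a product of random transition matrices; $\mathbb{E}\{\mathbf{e}(k)\}\to\mathbf{0}$ is compatible with $\mathbf{e}(k)$ fluctuating indefinitely with nonvanishing second moment, so the claimed consensus $\mathbf{\hat{x}}_1(k)=\cdots=\mathbf{\hat{x}}_L(k)=\mathbf{x}(k)$ does not follow. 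You cannot repair this by simply strengthening the last sentence: controlling the actual error requires a second-moment (or pathwise) argument, which is precisely what the paper does. It analyzes the noise-free dynamics \eqref{err_dyna} with the quadratic Lyapunov function $\mathbf{V}(\mathbf{\hat{e}}(k))=\sum_{i=1}^{L}\mathbf{\hat{e}}_i^{\text{T}}(k)\mathbf{P}_i^{-1}(k)\mathbf{\hat{e}}_i(k)$, shows $\delta\mathbf{V}<0$, and handles the randomness of the selection matrices through the second-moment bound $\mathbb{E}\{\mathbf{S}^{\text{T}}(k)\otimes\mathbf{S}(k)\}\leq p_e(\mathbf{I}\otimes\mathbf{I})$ applied to the quadratic term $\mathbf{S}(k)(\mathbf{L}\otimes\mathbf{I})\boldsymbol{\Lambda}_{\text{\rom{2}}}(\mathbf{L}\otimes\mathbf{I})\mathbf{S}(k)$. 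Note that $p_e$ enters there as a bound on a quadratic form in $\mathbf{S}(k)$, not as the linear replacement $\mathbb{E}\{\mathbf{S}_j(k)\}=p_e\mathbf{I}$ you use; the two are not interchangeable.

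Two secondary points. First, your perturbation step only asserts the \emph{existence} of a threshold $\bar{\gamma}$ by continuity of eigenvalues, whereas the paper's Lyapunov route yields the explicit, checkable bound \eqref{gamma_simplified} in terms of $p_e$, $\lambda_{\min}(\boldsymbol{\Lambda}_{\text{\rom{1}}})$, and $\lambda_{\max}((\mathbf{L}\otimes\mathbf{I})\boldsymbol{\Lambda}_{\text{\rom{2}}}(\mathbf{L}\otimes\mathbf{I}))$; the uniformity-in-$k$ issue you flag is sidestepped entirely because the Lyapunov function is itself time-varying, built from $\mathbf{P}_i^{-1}(k)$. Second, your baseline $\rho(\mathbf{\hat{F}}_i)<1$ imports detectability/stabilizability hypotheses that the theorem does not state; the paper instead gets negativity of the leading term of $\delta\mathbf{V}$ directly from $-\mathbf{\hat{e}}_i^{\text{T}}(k)\big(\mathbf{P}_i(k)+(\mathbf{H}_i^{\text{T}}\mathbf{R}_i^{-1}\mathbf{H}_i)^{-1}\big)^{-1}\mathbf{\hat{e}}_i(k)$, which relies on local observability rather than Schur stability of the closed-loop Kalman matrix. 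If you want to salvage your route, you would have to lift the recursion to the second moment (a recursion for $\mathbb{E}\{\mathbf{e}(k)\mathbf{e}^{\text{T}}(k)\}$ as in \eqref{Covar_actual}) and prove contraction of that operator for small $\gamma$ — at which point you have essentially rebuilt the paper's quadratic analysis.
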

\begin{proof}
The proof begins by analyzing the dynamics of the estimation error in the absence of noise~\cite{olfati2009kalman}. Given the consensus-based Kalman approach in~\eqref{state_upd}, the  estimation error dynamics, without noise, at each agent $i$,  can be written~as
\begin{align}
\mathbf{\hat{e}}_i(k+1)&=\mathbf{\hat{F}}_i(k)\mathbf{\hat{e}}_i(k)+\mathbf{C}_i \textstyle\sum_{j \in \mathcal{N}_i}\mathbf{S}_j(k)\big(\mathbf{\hat{e}}_j(k)-\mathbf{\hat{e}}_i(k)\big)\nonumber \\
&=\mathbf{\hat{F}}_i(k)\mathbf{\hat{e}}_i(k)+\mathbf{C}_i \mathbf{u}_i(k)\label{err_dyna}
\end{align}
where $\mathbf{u}_i(k)=\textstyle \sum_{j \in \mathcal{N}_i}\mathbf{S}_j(k)\big(\mathbf{\hat{e}}_j(k)-\mathbf{\hat{e}}_i(k)\big)$. 
Our goal is to determine $\mathbf{C}_i$ such that the estimation error dynamic in~\eqref{err_dyna} is stable. 
Following the  approach in~\cite{olfati2009kalman}, we use 
\begin{equation}
   \mathbf{V}(\mathbf{\hat{e}}(k))= \textstyle\sum_{i=1}^{L}\mathbf{\hat{e}}_i^{\text{T}}(k)\mathbf{P}_i^{-1}(k)\mathbf{\hat{e}}_i(k)
\end{equation}
as a candidate Lyapunov function for \eqref{err_dyna} where the network-wide stacked error is $\mathbf{\hat{e}}(k)\triangleq[\mathbf{\hat{e}}_1^\text{T}(k), \cdots,  \mathbf{\hat{e}}_L^\text{T}(k) ]^\text{T}$. 
We can then express $\delta \mathbf{V}(\mathbf{\hat{e}}(k))\triangleq\mathbf{V}(\mathbf{\hat{e}}(k+1))-\mathbf{V}(\mathbf{\hat{e}}(k))$ as  
\begin{align}
    \delta \mathbf{V}(\mathbf{\hat{e}}(k))=&\textstyle\sum_{i=1}^{L}\bigg(\mathbf{\hat{e}}_i^{\text{T}}(k+1)\mathbf{P}_i^{-1}(k+1)\mathbf{\hat{e}}_i(k+1)\nonumber\\
    &\quad \quad \quad -\mathbf{\hat{e}}_i^{\text{T}}(k)\mathbf{P}_i^{-1}(k)\mathbf{\hat{e}}_i(k)\bigg)\label{delta_Lyapanov}
\end{align} 
By substituting \eqref{Kal_gain} into \eqref{Pi_algo} 
and by employing the matrix inversion lemma, we have 
\begin{equation}\label{former_Lemma_1}
\mathbf{P}_i(k+1)=\mathbf{A}\mathbf{\bar{M}}_i^{-1}(k)\mathbf{A}^{\text{T}}+\mathbf{Q}
\end{equation}
where $\mathbf{\bar{M}}_i(k)=\mathbf{P}_i^{-1}(k)+\mathbf{H}_i^{\text{T}}\mathbf{R}_i^{-1}\mathbf{H}_i$. Subsequently, replacing \eqref{former_Lemma_1}, without noise, and \eqref{err_dyna} into \eqref{delta_Lyapanov} yields
\begin{align}    
    &\delta \mathbf{V}(\mathbf{\hat{e}}(k))=\textstyle\sum_{i=1}^{L}\hspace{-1mm}\bigg( \big(\mathbf{\hat{F}}_i(k)\mathbf{\hat{e}}_i(k)+\mathbf{C}_i\mathbf{u}_i(k)\big)^{\text{T}} \big(\mathbf{A}\mathbf{\bar{M}}_i^{-1}(k)\mathbf{A}^{\text{T}}\big)^{-1} \nonumber\\
    &\quad \quad  \big(\mathbf{\hat{F}}_i(k)\mathbf{\hat{e}}_i(k)+\mathbf{C}_i\mathbf{u}_i(k)\big) -\mathbf{\hat{e}}_i^{\text{T}}(k)\mathbf{P}_i^{-1}(k)\mathbf{\hat{e}}_i(k)\bigg)\label{delta_V16}
\end{align}
Furthermore, by substituting \eqref{Kal_gain} into $\mathbf{\hat{F}}_i(k)=\mathbf{A}-\mathbf{K}_i(k)\mathbf{H}_i$ 
and employing the matrix inversion lemma, we obtain 
$\mathbf{\hat{F}}_i(k)=\mathbf{A}\mathbf{\bar{M}}_i^{-1}(k)\mathbf{P}_i^{-1}(k)$.
Consequently, by replacing $\mathbf{\hat{F}}_i(k)$ into \eqref{delta_V16} and after some algebraic manipulations, we obtain
\begin{align}
    \delta\mathbf{V}&(\mathbf{\hat{e}}(k))=\textstyle\sum_{i=1}^{L}\bigg(\mathbf{\hat{e}}_i^{\text{T}}(k)\mathbf{P}_i^{-1}(k)\mathbf{\bar{M}}_i^{-1}(k)\mathbf{P}_i^{-1}(k)\mathbf{\hat{e}}_i(k)\nonumber\\
    &-\mathbf{\hat{e}}_i^{\text{T}}(k)\mathbf{P}_i^{-1}(k)\mathbf{\hat{e}}_i(k)+\mathbf{\hat{e}}_i^{\text{T}}(k)\mathbf{P}_i^{-1}(k)\mathbf{A}^{-1}\mathbf{C}_i\mathbf{u}_i(k) \nonumber\\
    &  + \mathbf{u}_i^{\text{T}}(k)\mathbf{C}_i^{\text{T}}\mathbf{A}^{-\text{T}}\mathbf{P}_i^{-1}(k)\mathbf{\hat{e}}_i(k) \nonumber \\
    &  +\mathbf{u}_i^{\text{T}}(k)\mathbf{C}_i^{\text{T}}\mathbf{A}^{-\text{T}}\mathbf{\bar{M}}_i(k)\mathbf{A}^{-1}\mathbf{C}_i\mathbf{u}_i(k)\bigg)\nonumber \\
    =&\textstyle\sum_{i=1}^{L}\bigg(-\mathbf{\hat{e}}_i^{\text{T}}(k)\big(\mathbf{P}_i(k)+(\mathbf{H}_i^{\text{T}}\mathbf{R}_i^{-1}\mathbf{H}_i)^{-1} \big)^{-1}\mathbf{\hat{e}}_i(k) \nonumber \\
    & \quad\quad\quad+2\mathbf{\hat{e}}_i^{\text{T}}(k)\mathbf{P}_i^{-1}(k)\mathbf{A}^{-1}\mathbf{C}_i\mathbf{u}_i(k) \label{F15} \\
    &\quad\quad\quad+\mathbf{u}_i^{\text{T}}(k)\mathbf{C}_i^{\text{T}}\mathbf{A}^{-\text{T}}\mathbf{\bar{M}}_i(k)\mathbf{A}^{-1}\mathbf{C}_i\mathbf{u}_i(k)\bigg)\nonumber
\end{align}
With an appropriate choice of consensus gain, i.e., $\mathbf{C}_i=\gamma \mathbf{A}\mathbf{\bar{M}}_i^{-1}(k)$, and a proper selection of $\gamma>0$, all terms of~\eqref{F15} become negative semidefinite.
Subsequently, we have 
\begin{align}
    \delta\mathbf{V}(\mathbf{\hat{e}}&(k))= \textstyle\sum_{i=1}^{L}\bigg(-\mathbf{\hat{e}}_i^{\text{T}}(k)\big(\mathbf{P}_i(k)+(\mathbf{H}_i^{\text{T}}\mathbf{R}_i^{-1}\mathbf{H}_i)^{-1} \big)^{-1}\mathbf{\hat{e}}_i(k) \nonumber \\
    &\quad \quad+2\gamma\mathbf{\hat{e}}_i^{\text{T}}(k)\big( \mathbf{I}+\mathbf{H}_i^{\text{T}}\mathbf{R}_i^{-1}\mathbf{H}_i\mathbf{P}_i(k)\big)^{-1}\mathbf{u}_i(k) \label{f14} \\
    &\quad \quad+\gamma^2 \mathbf{u}_i^{\text{T}}(k)\big(\mathbf{P}_i^{-1}(k)+\mathbf{H}_i^{\text{T}}\mathbf{R}_i^{-1}\mathbf{H}_i\big)^{-1}\mathbf{u}_i(k) \bigg)\nonumber
\end{align}
By defining $\mathbf{S}(k)\triangleq \textsf{diag}(\{ \mathbf{S}_i(k)\}_{i=1}^L)$, \eqref{f14} becomes
\begin{align}
    \delta &\mathbf{V}(\mathbf{\hat{e}}(k))=-2\gamma \mathbf{\hat{e}}^{\text{T}}(k)\boldsymbol{\Lambda}_{\text{\rom{3}}}(k)(\mathbf{L}\otimes\mathbf{I})\mathbf{S}(k)\mathbf{\hat{e}}(k)\label{Lya_vari} \\
    &-\mathbf{\hat{e}}^{\text{T}}(k) \bigg( \boldsymbol{\Lambda}_{\text{\rom{1}}}(k)-\gamma^2\mathbf{S}(k)(\mathbf{L}\otimes\mathbf{I})\boldsymbol{\Lambda}_{\text{\rom{2}}}(k)(\mathbf{L}\otimes\mathbf{I})\mathbf{S}(k)\bigg)\mathbf{\hat{e}}(k) \nonumber
\end{align}
where $\mathbf{L}=\mathbf{D}-\mathbf{E}$ is the network Laplacian and 
\begin{align}
\boldsymbol{\Lambda}_{\text{\rom{1}}}(k)&=\textsf{diag}\left(\left\{ \big(\mathbf{P}_i(k)+(\mathbf{H}_i^{\text{T}}\mathbf{R}_i^{-1}\mathbf{H}_i)^{-1} \big)^{-1}\right\}_{i=1}^L\right)\nonumber\\
\boldsymbol{\Lambda}_{\text{\rom{2}}}(k)&=\textsf{diag}\left(\left\{ \mathbf{\bar{M}}_i^{-1}(k)\right\}_{i=1}^L\right)\nonumber\\
\boldsymbol{\Lambda}_{\text{\rom{3}}}(k)&=\textsf{diag}\left(\left\{ \big( \mathbf{I}+\mathbf{H}_i^{\text{T}}\mathbf{R}_i^{-1}\mathbf{H}_i\mathbf{P}_i(k)\big)^{-1}\right\}_{i=1}^L\right) \nonumber
\end{align}
For an appropriate choice of $\gamma$, we have $\delta\mathbf{V}(\mathbf{\hat{e}}(k))<0$, implying that $\mathbf{\hat{e}}(k)\rightarrow\mathbf{0}$. Consequently, $\mathbf{\hat{e}}(k)=\mathbf{0}$ is asymptotically stable. Furthermore, since $\mathbf{\hat{e}}_i(k)=\mathbf{\hat{e}}_j(k)=\mathbf{0}$ for all $i\neq j$, all estimators asymptotically reach a consensus on state estimates~as $\mathbf{\hat{x}}_1(k)=~\cdots~=\mathbf{\hat{x}}_L(k)=\mathbf{x}(k)$. 

In steady-state, i.e., $k \rightarrow \infty$, we have $\boldsymbol{\Lambda}_{\text{\rom{1}}}=\lim_{k\rightarrow \infty}\boldsymbol{\Lambda}_{\text{\rom{1}}}(k)$, $\boldsymbol{\Lambda}_{\text{\rom{2}}}=\lim_{k\rightarrow \infty}\boldsymbol{\Lambda}_{\text{\rom{2}}}(k)$.
By applying statistical expectation $\mathbb{E}\{\cdot\}$  with respect to  $\mathbf{S}(k)$,  we will have the following condition for the stability of the algorithm: 
\begin{equation}\label{Condition_inq1}
\boldsymbol{\Lambda}_{\text{\rom{1}}}-\gamma^2\mathbb{E}_{}\{\mathbf{S}(k)(\mathbf{L}\otimes\mathbf{I})\boldsymbol{\Lambda}_{\text{\rom{2}}}(\mathbf{L}\otimes\mathbf{I})\mathbf{S}(k)\} \succcurlyeq 0 
\end{equation}
The expectation term can be simplified as
\begin{align}
    \mathbb{E}&_{}\{\mathbf{S}(k)(\mathbf{L}\otimes\mathbf{I})\boldsymbol{\Lambda}_{\text{\rom{2}}}(\mathbf{L}\otimes\mathbf{I})\mathbf{S}(k)\} \label{Ex_compute} \\ &=\mathbb{E}_{}\{\text{vec}^{-1}\left(\text{vec}\left(\mathbf{S}(k)(\mathbf{L}\otimes\mathbf{I})\boldsymbol{\Lambda}_{\text{\rom{2}}}(\mathbf{L}\otimes\mathbf{I})\mathbf{S}(k)\right)\right)\} \nonumber \\
    & = \mathbb{E}_{}\{\text{vec}^{-1}\left(\left(\mathbf{S}^{\text{T}}(k)\otimes\mathbf{S}(k)\right)\text{vec}\left((\mathbf{L}\otimes\mathbf{I})\boldsymbol{\Lambda}_{\text{\rom{2}}}(\mathbf{L}\otimes\mathbf{I})\right)\right)\} \nonumber \\
    & = \text{vec}^{-1}\left(\mathbb{E}_{}\{\mathbf{S}^{\text{T}}(k)\otimes\mathbf{S}(k)\}\text{vec}\left((\mathbf{L}\otimes\mathbf{I})\boldsymbol{\Lambda}_{\text{\rom{2}}}(\mathbf{L}\otimes\mathbf{I})\right)\right) \nonumber
\end{align}
Following the  approach in~\cite[Appendix~B]{PdLMS} and \cite{Vinay_IoT}, we can show that  $\mathbb{E}\{\mathbf{S}^\text{T}(k)\otimes \mathbf{S}(k)\} \leq p_e (\mathbf{I}\otimes \mathbf{I})$ with $0< p_e\leq 1$, and we have 
\begin{align}
    \mathbb{E}\{\mathbf{S}(k)&(\mathbf{L}\otimes\mathbf{I})\boldsymbol{\Lambda}_{\text{\rom{2}}}(\mathbf{L}\otimes\mathbf{I})\mathbf{S}(k)\} \nonumber \\ 
    &\leq p_e\,\text{vec}^{-1}\left(\text{vec}\left((\mathbf{L}\otimes\mathbf{I})\boldsymbol{\Lambda}_{\text{\rom{2}}}(\mathbf{L}\otimes\mathbf{I})\right)\right) \nonumber \\
    &\leq p_e\,(\mathbf{L}\otimes\mathbf{I})\boldsymbol{\Lambda}_{\text{\rom{2}}}(\mathbf{L}\otimes\mathbf{I})\cdot \nonumber
\end{align}
Subsequently, to satisfy \eqref{Condition_inq1}, the bound for $\gamma$ is determined~as 
\begin{equation}\label{gamma_simplified}
\gamma \leq \gamma^* = \frac{1}{\sqrt{p_e}} \left(\frac{\lambda_{\min}(\boldsymbol{\Lambda}_{\text{\rom{1}}})}{\lambda_{\max}((\mathbf{L}\otimes\mathbf{I})\boldsymbol{\Lambda}_{\text{\rom{2}}}(\mathbf{L}\otimes\mathbf{I}))}\right)^{\frac{1}{2}}
\end{equation}
Thus, if $\gamma$ is chosen as in~\eqref{gamma_simplified}, we can ensure that all agents reach a consensus on state estimates asymptotically, which completes the proof.
\end{proof}

\section{Resilience of the BR-CDF to Byzantine attacks}

This section investigates the robustness of the  BR-CDF in Algorithm~\ref{Alg1} to data falsification attacks. We assume that Byzantine agents start perturbing the information once the network reaches steady-state, i.e., $k = k_0>0$. We further assume that the attack remains stealthy from the perspective of agents; thus, the consensus gain $\mathbf{C}_i$ remains fixed for $k~\geq~k_0$. 

In steady-state, the error covariance matrix in~\eqref{Pi_BRDKF} satisfies 
\begin{align}
\mathbf{P}_i=&\mathbf{\hat{F}}_i\mathbf{P}_i\mathbf{\hat{F}}_i^\text{T} +\mathbf{K}_i\mathbf{R}_i\mathbf{K}_i^\text{T} +\mathbf{Q}\label{f21}\\
&+ \mathbf{C}_i   \mathbb{E}_{}\left\{\textstyle\sum_{s \in \mathcal{N}_i}  \textstyle\sum_{p \in \mathcal{N}_i}\mathbf{S}_s(k) \boldsymbol{\Sigma}_{sp} \mathbf{S}_p^\text{T}(k)\right\}\mathbf{C}_i^\text{T} \nonumber
\end{align}
where $\mathbf{P}_i=\lim_{k \rightarrow \infty}\mathbf{P}_i(k)$.
Defining  $\mathbf{P}\triangleq\textsf{diag}(\{ \mathbf{P}_i\}_{i=1}^L)$, $\mathbf{\hat{F}}\triangleq\textsf{diag}(\{ \mathbf{\hat{F}}_i\}_{i=1}^L)$, $\mathbf{K}\triangleq\textsf{diag}(\{ \mathbf{K}_i\}_{i=1}^L)$,  $\mathbf{C}\triangleq\textsf{diag}(\{ \mathbf{C}_i\}_{i=1}^L)$, and $\mathbf{R}\triangleq\textsf{diag}(\{ \mathbf{R}_i\}_{i=1}^L)$,  the network-wide version of \eqref{f21} is given by 
\begin{align}
&\mathbf{P}=\mathbf{\hat{F}}\mathbf{P}\mathbf{\hat{F}}^\text{T} +\mathbf{K}\mathbf{R}\mathbf{K}^\text{T} +\mathbf{I}_L \otimes \mathbf{Q}\label{f21_Net}\\
&+ \mathbf{C}  \mathbb{E}_{}\left\{(\mathbf{I}_L \otimes \mathbf{I})\odot\left((\mathbf{E}\otimes \mathbf{I})\mathbf{S}(k)\mathbf{\Sigma}\mathbf{S}^{\text{T}}(k)(\mathbf{E}\otimes \mathbf{I})\right)\right\}\mathbf{C}^\text{T} \nonumber
\end{align}
where $\mathbf{\Sigma}$ is the network-wide coordinated attack covariance. Under \textbf{Assumption~1}, we have 
\begin{align}
&\mathbf{P}=\mathbf{\hat{F}}\mathbf{P}\mathbf{\hat{F}}^\text{T} +\mathbf{K}\mathbf{R}\mathbf{K}^\text{T} +\mathbf{I}_L \otimes \mathbf{Q}\label{f21_Net_1}\\
&+ \mathbf{C}  \bigg((\mathbf{I}_L \otimes \mathbf{I})\odot\left((\mathbf{E}\otimes \mathbf{I})\mathbb{E}_{}\left\{\mathbf{S}(k)\mathbf{\Sigma}\mathbf{S}^{\text{T}}(k)\right\}(\mathbf{E}\otimes \mathbf{I})\right)\bigg)\mathbf{C}^\text{T} \nonumber
\end{align}
and  using the result of~\eqref{Ex_compute}, we finally have 
\begin{align}
\mathbf{P} = &\,\mathbf{\hat{F}}\mathbf{P}\mathbf{\hat{F}}^\text{T} +\mathbf{K}\mathbf{R}\mathbf{K}^\text{T} +\mathbf{I}_L \otimes \mathbf{Q}\label{f21_Net_2}\\
&+ p_e \mathbf{C}  \bigg( (\mathbf{I}_L \otimes \mathbf{I})\odot\big((\mathbf{E}\otimes \mathbf{I})\mathbf{\Sigma}(\mathbf{E}\otimes \mathbf{I})\big)\bigg)\mathbf{C}^\text{T} \nonumber
\end{align}
The last term in~\eqref{f21_Net_2} describes the impact of the coordinated Byzantine attack on the error covariance matrix that is scaled by $p_e$.
Thus, defining the steady-state  network-wide MSE (NMSE) as
\begin{equation}
 \label{Net_MSE}
 \text{NMSE}\triangleq \lim_{k \rightarrow \infty}\text{tr}(\mathbb{E}\{\mathbf{P}(k)\}),
  \end{equation}
we see that partial sharing, i.e., $p_e < 1$, results in lower steady-state NMSE compared to the case when the full state is shared, i.e., $p_e=1$, which gives enhanced robustness against coordinated Byzantine attacks.

\section{Coordinated Byzantine Attack Design}\label{coor_Byz_att}
To analyze the worst-case performance of the BR-CDF algorithm, we consider a scenario where Byzantine agents design a coordinated attack to maximize the NMSE. Based on the  attack model in Section~\ref{Sys_Model_DKF} and the  error covariance of  the  BR-CDF algorithm in~\eqref{f6}, Byzantine agents have the following two levers to design their coordinated attack:
\begin{itemize}
    \item The design of perturbation covariance matrix $\mathbf{\Sigma}$, modeled as the covariance of zero-mean Gaussian sequences.
    \item The choice of selection matrices that impacts the sequence of information fractions that Byzantine agents share at the beginning of the attack, i.e., $\mathbf{S}_i(k_0)$ for $i \in \mathcal{B}$.
\end{itemize}

We ensure that the attack remains stealthy from the perspective of regular agents by setting an upper bound on the energy of the perturbation sequences, i.e., $\text{tr}(\boldsymbol{\Sigma}) \leq \eta$. Assuming Byzantines start perturbing information once agents reach steady-state, i.e., $k = k_0$, we derive an expression for the NMSE pertaining to the estimator in~\eqref{f5}. The network-wide evolution of the estimation error of the BR-CDF algorithm,  given in~\eqref{Error_actual}, is given by
\begin{equation}
	\mathbf{e}(k+1)=\mathbf{\tilde{A}}(k)\mathbf{e}(k)+\mathbf{\tilde{b}}(k)+\boldsymbol{\Gamma}(k){\boldsymbol{\delta}}(k)
\end{equation}
where $\mathbf{e}(k)\triangleq[\mathbf{e}_1^\text{T}(k), \cdots,  \mathbf{e}_L^\text{T}(k) ]^\text{T}$,
\begin{align}
\mathbf{\tilde{A}}(k)&=\textsf{diag}(\{ \mathbf{F}_i(k)\}_{i=1}^L)+\mathbf{C}\big(\mathbf{E}\otimes\mathbf{I}\big)\mathbf{S}(k)\nonumber\\
\mathbf{\tilde{b}}(k)&=\textsf{diag}(\{ \mathbf{{K}}_i(k)\mathbf{v}_i(k)\}_{i=1}^L)-\mathbf{1}_L\otimes \mathbf{w}(k)\nonumber\\
\boldsymbol{\Gamma}(k)&=\mathbf{C}\big(\mathbf{E}\otimes\mathbf{I}\big)\mathbf{S}(k) \nonumber
\end{align}
As a result, the network-wide error covariance matrix $\mathbf{P}(k) \triangleq \mathbb{E}\{\mathbf{e}(k)\mathbf{e}^\text{T}(k)\}$,  including cross-terms of the error covariance, is given by
\begin{equation}
\label{Covar_actual}
\mathbf{P}(k+1)= \mathbf{\tilde{A}}(k)\mathbf{P}(k)\mathbf{\tilde{A}}^\text{T}(k) + \mathbf{\tilde{Q}}(k) + \boldsymbol{\Gamma}(k) \boldsymbol{\Sigma} \boldsymbol{\Gamma}^\text{T}(k)
\end{equation}
where $\mathbf{\tilde{Q}}(k)= \textsf{diag}(\{ {\mathbf{K}}_i(k)\mathbf{R}_i {\mathbf{K}}_i^\text{T}(k)\}_{i=1}^L  + \mathbf{1}_L \mathbf{1}_L^\text{T} \otimes \mathbf{Q}$.
In~\eqref{Covar_actual}, the last term is due to the injected noise and is given~by 
\begin{align}
    \boldsymbol{\Gamma}(k) \boldsymbol{\Sigma}  \boldsymbol{\Gamma}^\text{T}(k)=\mathbf{C}\big(\mathbf{E}\otimes\mathbf{I}\big)\mathbf{S}(k)\boldsymbol{\Sigma}\mathbf{S}^\text{T}(k)\big(\mathbf{E}\otimes\mathbf{I}\big)\mathbf{C}^\text{T} \label{Deg_fac}
\end{align}
 which, compared to the Byzantine-free case, degrades the NMSE.
Considering the NMSE in~\eqref{Net_MSE}, we define two optimization problems to find the optimal coordinated Byzantine attacks by designing the partial-sharing selection matrices at $k=k_0$ and attack covariance matrices of Byzantine agents.

The last term of the estimation error covariance $\mathbf{P}(k)$, as in~\eqref{Deg_fac},  is the only term of~\eqref{Covar_actual} that depends on the attack; thus, maximizing the trace of the estimation error covariance is equivalent to maximizing the trace of its last term \cite{guo2018worst}. The last term of  $\mathbf{P}(k)$ in~\eqref{Covar_actual} also depends on the selection matrix $\mathbf{S}(k)$ and given the attack  covariance $\boldsymbol{\Sigma}$, we can show that  
\begin{align}
     \text{tr}(\boldsymbol{\Gamma}(k) \boldsymbol{\Sigma}  \boldsymbol{\Gamma}^\text{T}(k))&=\text{tr}\big(\mathbf{C}(\mathbf{E}\otimes\mathbf{I})\mathbf{S}(k)\boldsymbol{\Sigma}\mathbf{S}^\text{T}(k)(\mathbf{E}\otimes\mathbf{I})\mathbf{C}^\text{T}\big)\nonumber \\
     &= \text{tr}\big((\mathbf{E}\otimes\mathbf{I})\mathbf{C}^\text{T}\mathbf{C}(\mathbf{E}\otimes\mathbf{I})\mathbf{S}(k)\boldsymbol{\Sigma}\mathbf{S}^\text{T}(k)\big) \nonumber\\
     &=\textstyle\sum_{i \in \mathcal{B}} \textstyle\sum_{j \in \mathcal{B}} \text{tr} \left(\mathbf{U}_{ij} \mathbf{S}_j(k) \boldsymbol{\Sigma}_{ji} \mathbf{S}_i(k)\right)\label{tr_last_term}
 \end{align} 
 where $\mathbf{U}_{ij}= \textstyle\sum_{q \in \mathcal{N}_i \cap\mathcal{N}_j} \mathbf{C}_q^\text{T}\mathbf{C}_q$. Thus, the optimization problem that maximizes the steady-state NMSE can be stated as
 \begin{equation}
	\begin{aligned}\label{OPT_S}
	&& \underset{\{\mathbf{S}_i,\, i \in \mathcal{B}\} }{\max} 
	 &  \sum_{i \in \mathcal{B}} \sum_{j \in \mathcal{B}} \text{tr} \left(\mathbf{U}_{ij} \mathbf{S}_j \boldsymbol{\Sigma}_{ji} \mathbf{S}_i\right) \\
	&& \textrm{s. t.}\,\,\,
	 &  \mathbf{0} \leq\mathbf{S}_i \leq \mathbf{I} \quad \forall i \in \mathcal{B} \\
	 &&& [\mathbf{S}_i]_{rs} \in \{0,1\} \\
	 &&& \text{tr}(\mathbf{S}_i) \leq l \quad \forall i \in \mathcal{B}
	\end{aligned}
\end{equation}
where the resulted solution for $\mathbf{S}_i$ determines the  $\mathbf{S}_i(k_0)$ and the first two constraints restrict the selection matrix to be diagonal with $0$ or $1$ elements on the main diagonal.
The last constraint enforces that only $l$ elements of the state vector are shared with neighbors at each given instant. 
We relax the non-convex Boolean constraint on the elements of $\mathbf{S}_i$ and rewrite the optimization problem as
\begin{equation}
	\begin{aligned}\label{OPT_S_main}
	&& \underset{\{\mathbf{S}_i, \, i \in \mathcal{B}\} }{\max} 
	 &  \sum_{i \in \mathcal{B}} \sum_{j \in \mathcal{B}} \text{tr} \left(\mathbf{U}_{ij} \mathbf{S}_j \boldsymbol{\Sigma}_{ji} \mathbf{S}_i\right) \\
	&& \textrm{s. t.}\,\,\,
	 &  \mathbf{0} \leq\mathbf{S}_i \leq \mathbf{I} \quad \forall i \in \mathcal{B} \\
	 &&& \text{tr}(\mathbf{S}_i) \leq l \quad \forall i \in \mathcal{B}
	\end{aligned}
\end{equation}

The objective function  in~\eqref{OPT_S_main} can be  further simplified as 
\begin{align}
    \label{eq24}
    \sum_{i \in \mathcal{B}} \sum_{j \in \mathcal{B}} & \text{tr} \left(\mathbf{U}_{ij} \mathbf{S}_j \boldsymbol{\Sigma}_{ji} \mathbf{S}_i\right) =\sum_{i \in \mathcal{B}}\bigg(\text{tr} \left(\mathbf{U}_{ii} \mathbf{S}_i \boldsymbol{\Sigma}_{i} \mathbf{S}_i\right)\\
    &+\sum_{j \in \mathcal{B}/\{i\}} \frac{1}{2}\text{tr} \left(\mathbf{U}_{ij} \mathbf{S}_j \boldsymbol{\Sigma}_{ji} \mathbf{S}_i +\mathbf{U}_{ji} \mathbf{S}_i \boldsymbol{\Sigma}_{ij} \mathbf{S}_j\right)\bigg)\nonumber
\end{align}
which still contains non-convex quadratic terms. To overcome this problem, we employ the block coordinate descent (BCD) algorithm where each Byzantine agent $i$, given the selection matrix of other Byzantines,  optimizes its own selection matrix. The BCD algorithm is iterated for $T$ iterations and at each iteration $t+1$, agent $i$ employs the selection matrix of other Byzantine agents from the previous iteration, i.e.  $\{\mathbf{S}_j^{t}\}_{j \in \mathcal{B} \backslash \{i\}}$.

Hence, the optimization problem in~\eqref{OPT_S_main} can be solved by employing the BCD method, where at each agent $i \in \mathcal{B}$ and BCD iteration $t+1$, the optimization problem is modeled as 
\begin{equation}
\label{BCD_opt}
	\begin{aligned}
	\mathcal{P}: \quad & \underset{\mathbf{S}_i }{\max} 
	 && F(\mathbf{S}_i,\{\mathbf{S}_j^t\}_{j \in \mathcal{B}\backslash \{i\}}) \\
	& \textrm{s. t.}\,\,\,
	 &&  \mathbf{0} \leq\mathbf{S}_i \leq \mathbf{I}\\
	 &&& \text{tr}(\mathbf{S}_i) \leq l
	\end{aligned}
\end{equation}
with the objective function
\begin{align}
F(\mathbf{S}_i,&\{\mathbf{S}_j^t\}_{j \in \mathcal{B}  \backslash \{i\}})= \text{tr} \left(\mathbf{U}_{ii} \mathbf{S}_i \boldsymbol{\Sigma}_{i} \mathbf{S}_i\right)\label{eq23}\\
&+\sum_{j \in \mathcal{B}/\{i\}} \frac{1}{2}\text{tr} \left(\mathbf{U}_{ij} \mathbf{S}_j^t \boldsymbol{\Sigma}_{ji} \mathbf{S}_i +\mathbf{U}_{ji} \mathbf{S}_i \boldsymbol{\Sigma}_{ij} \mathbf{S}_j^t\right) \nonumber
\end{align}
and $\mathbf{S}_j^t$ as the selection matrix of  Byzantine agent $j$ at the former BCD iteration.
\begin{algorithm}[t]
\caption{BCD-based attack design}
\label{Alg2}
\begin{algorithmic}
\renewcommand{\algorithmicrequire}{ For}
\REQUIRE each agent $i \in \mathcal{B}$
\renewcommand{\algorithmicrequire}{ Receive}
\REQUIRE $\mathbf{S}_j^0=\mathbf{S}_j(k_0)$ from $j \in \mathcal{B}\backslash \{i\}$
\renewcommand{\algorithmicrequire}{ Share}
\REQUIRE $\mathbf{S}_i^0=\mathbf{S}_i(k_0)$ with $j \in \mathcal{B}\backslash \{i\}$
\FOR{$t=1$ \TO T}
\STATE Find $\mathbf{S}_i$ by solving $\mathcal{P}$ in~\eqref{BCD_opt}
\STATE Set $\mathbf{S}_i^t=\mathbf{S}_i$ and share with $j \in \mathcal{B}\backslash \{i\}$
\STATE Receive $\{\mathbf{S}_j^t\}_{j \in \mathcal{B}\backslash \{i\}}$
\ENDFOR
\STATE For the main diagonal of the $\mathbf{S}_i^T$, set the $l$ largest element to $1$ and the others to $0$.
\STATE Set $\mathbf{S}_i(k_0)=\mathbf{S}_i^T$
\end{algorithmic} 
\end{algorithm}
 Algorithm~\ref{Alg2} summarizes the BCD algorithm used to solve the optimization problem in~\eqref{BCD_opt}. Next, we investigate how optimizing the perturbation covariance matrix impacts the NMSE.

Given the selection matrices at the beginning of the attack, i.e., $\mathbf{S}_i(k_0)$ for $i \in \mathcal{N}$, Byzantine agents can maximize the steady-state NMSE  by cooperatively designing their attack covariances in the following optimization problem
 \begin{equation}
	\begin{aligned}\label{OPT}
	&& \underset{\boldsymbol{\Sigma} }{\max} 
	 &\quad  \text{tr}(\boldsymbol{\Gamma}(k_0) \boldsymbol{\Sigma} \boldsymbol{\Gamma}^{\text{T}}(k_0)) \\
	&& \textrm{s. t.}\,\,\,
	 &   \quad\boldsymbol{\Sigma} \succcurlyeq 0 \\
	 &&& \quad\text{tr}(\boldsymbol{\Sigma}) \leq \eta
	\end{aligned}
\end{equation}
where $\small{\boldsymbol{\Gamma}(k_0)=\mathbf{C}\big(\mathbf{E}\otimes\mathbf{I}\big)\mathbf{S}(k_0) (\,\textsf{diag}(\mathbf{z}) \otimes \mathbf{I})}$ and  $\mathbf{z} \in \mathbb{R}^L$ is a vector designed to preserve the structure of the perturbation covariance. We introduce the Boolean vector $\mathbf{z}=[z_1, z_2,\cdots,z_L]^\text{T}$ where $z_i=1$ if $i \in \mathcal{B}$ and $z_i=0$ otherwise. By employing  $\mathbf{z}$, the block matrices of the $\boldsymbol{\Sigma}$ that correspond to regular agents are all set to zero. The first constraint in~\eqref{OPT} guarantees that the designed attack covariance is positive semidefinite and the last constraint is related to stealthiness. The energy of the Byzantine noise sequences is assumed to be limited as $\text{tr}(\boldsymbol{\Sigma}) \leq \eta$ to maintain the attack stealthiness. 

\begin{rem}
The optimization problem in \eqref{OPT} is a semidefinite programming (SDP) problem that can be efficiently solved by interior-point methods.
\end{rem}

\section{Simulation Results}
In this section, we demonstrate the robustness of the  BR-CDF algorithm to Byzantine attacks. For this purpose, we consider  a target tracking problem with the state vector length of $m=8$, described by a linear model  
\begin{equation*}
	\mathbf{x}(k + 1) = \left(\begin{bmatrix}
    0.6&0.005 \\ 0.25&0.6
    \end{bmatrix} \otimes \mathbf{I}_4\right)\, \mathbf{x}(k) +\mathbf{w}(k)
\end{equation*}
\begin{figure}[!t]
	\centering
	\includegraphics[width=.4\textwidth]{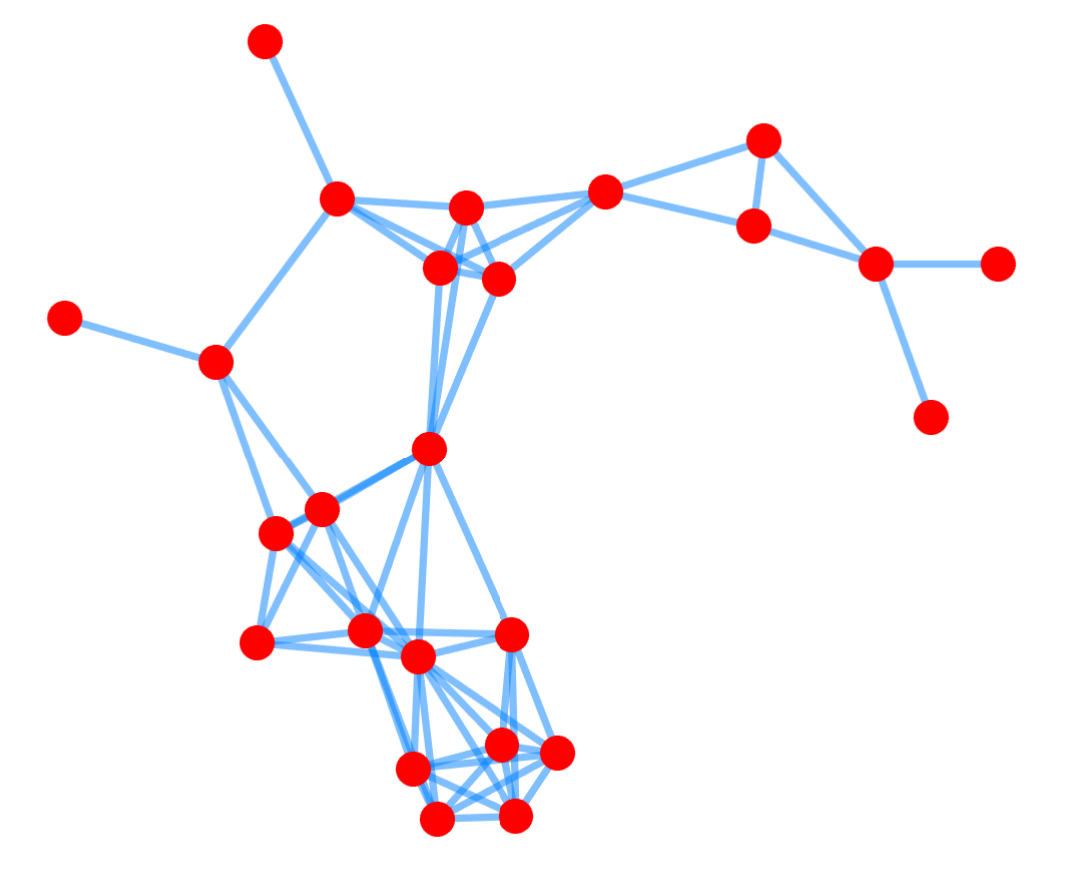}
	\caption{Randomly generated network topology.}
	\label{Fig_1}
\end{figure}
We considered a randomly generated undirected connected network with $L=25$ agents,  as shown in Fig.~\ref{Fig_1}. At each agent $i$, the state noise covariance is $\mathbf{Q}=0.1\mathbf{I}$ and the local observation is given by 
\begin{equation*}
\mathbf{y}_i(k) = \left(\begin{bmatrix}
    1&1&0&0 \\ 1&0&0&0\\0&0&1&0\\0&0&1&1
    \end{bmatrix} \otimes \mathbf{I}_2\right)\,\mathbf{x}(k) + \mathbf{v}_i(k)
\end{equation*}
In addition, at each agent $i$, we considered the observation noise covariance as $\mathbf{R}_i=\mu_i \mathbf{I}$, where $\mu_i \sim \mathcal{U}(0,1)$.  The average NMSE of agents is considered as a performance metric, i.e., 
\begin{equation}
\label{MSE}
\text{MSE}\triangleq\frac{1}{L}\textstyle\sum_{i=1}^L \text{tr}(\mathbf{P}_i)
\end{equation}
with $\mathbf{P}_i$ being the steady-state error covariance matrix of agent $i$ in~\eqref{f21}. The simulation results presented in the following are obtained by averaging over 100 independent experiments.
\begin{figure}[!t]
	\centering
	\includegraphics[width=.5\textwidth]{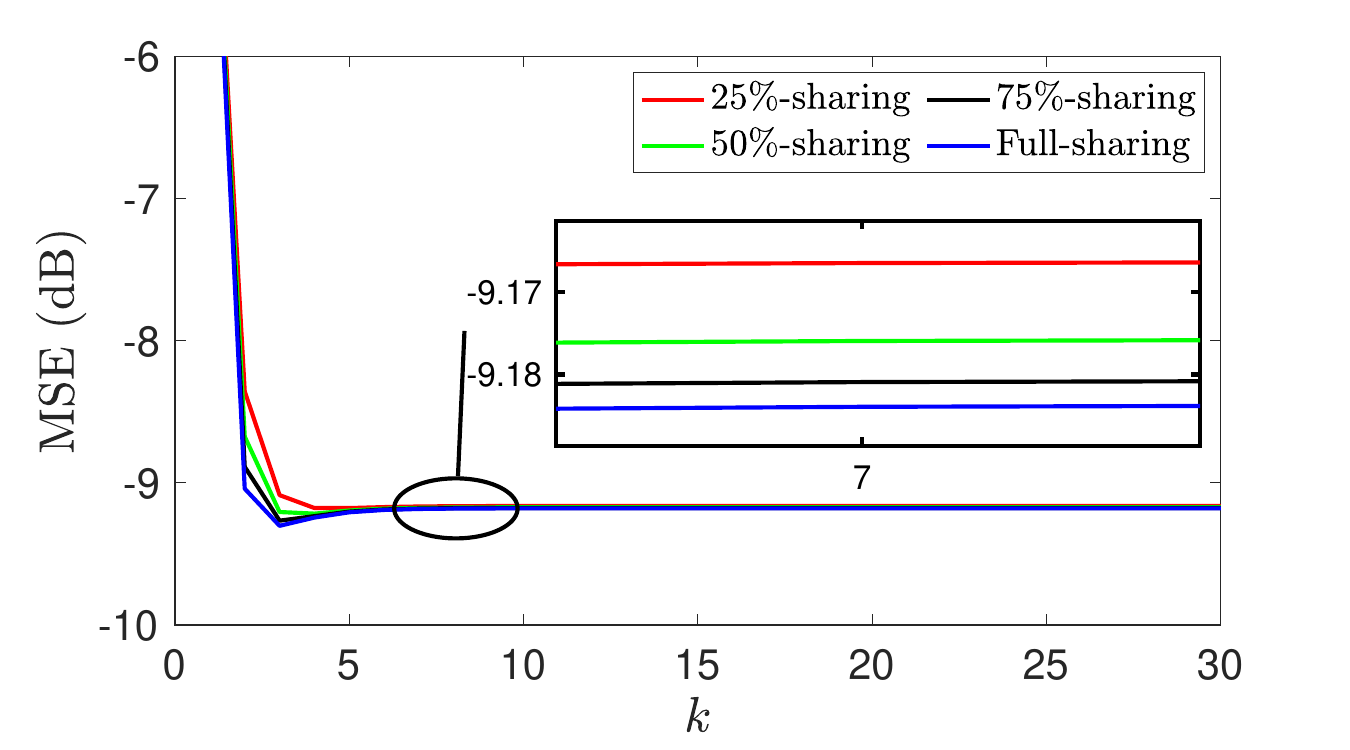}
	\caption{MSE of the  BR-CDF  algorithm versus time index $k$ without~attack.}
	\label{Fig_1_2}
\end{figure}

We simulated the proposed BR-CDF algorithm for different values of $l$, e.g., 2, 4, 6, and 8 (i.e., $25\%$, $50\%$, $75\%$ and $100\%$ information sharing). Fig.~\ref{Fig_1_2} shows the corresponding learning curves, i.e., MSE versus time instant $k$, when no attacks occur in the network. We see that the performance degradation is inversely proportional to the amount of information sharing. Although sharing a smaller fraction results in higher MSE, the difference is negligible in this experiment.

\begin{figure}[!t]
	\centering
	\includegraphics[width=.5\textwidth]{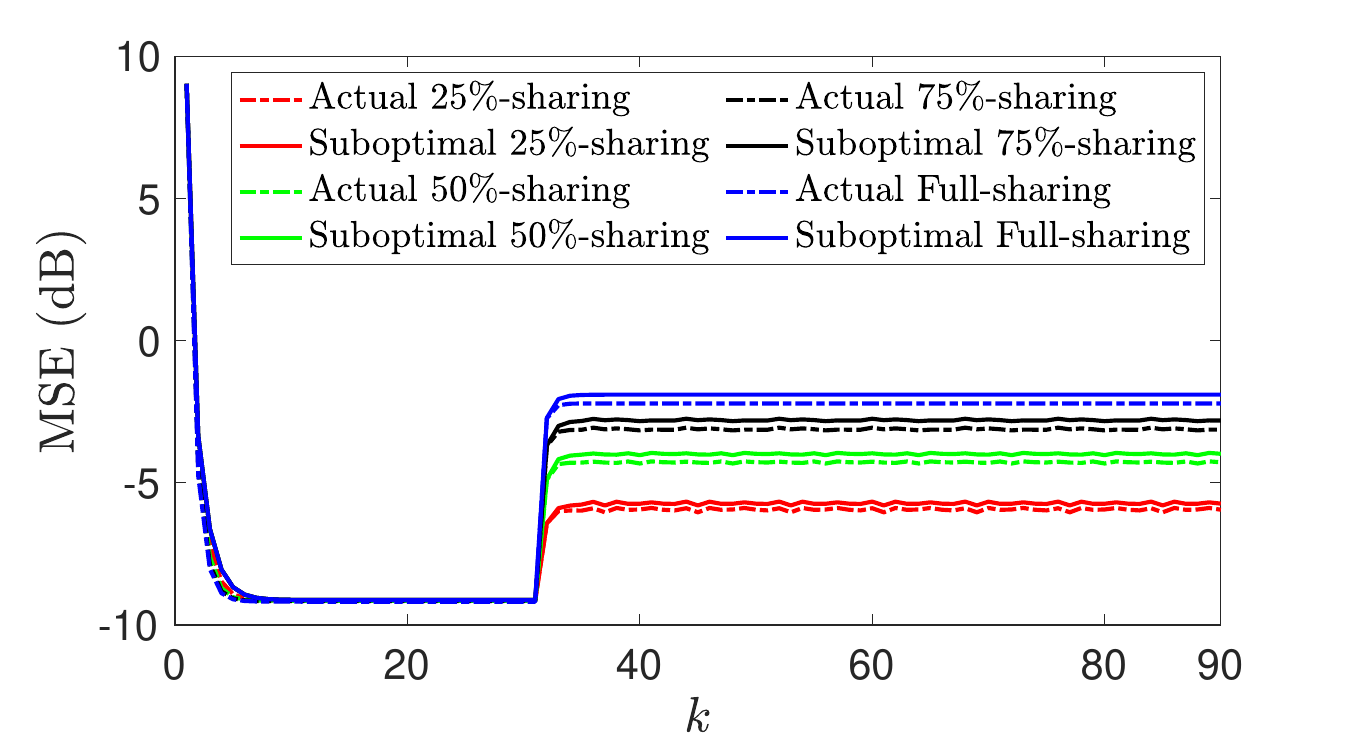}
	\caption{MSE of the BR-CDF and its suboptimal solution versus time index~$k$.}
	\label{Fig_2}
\end{figure}

Next, we examined the robustness of the  BR-CDF algorithm to Byzantine attacks. After the network has reached convergence, the Byzantine agents launch an attack at $k_0=30$. The Byzantine agents are chosen as the $B=5$ nodes with the highest degree in the network graph and the energy of the attack sequences is restricted with parameter $\eta=L$.  We then compared the accuracy of the proposed suboptimal BR-CDF in Algorithm~\ref{Alg1} to the solution of the BR-CDF that
shares all necessary variables. Fig.~\ref{Fig_2} illustrates their corresponding learning curves for different values of $l$. We observe that the suboptimal solution performs closely to the solution that
shares all necessary variables. Furthermore, the proposed algorithms provide robustness against Byzantine attacks since sharing less information results in lower MSE.

In Fig.~\ref{Fig_3}, in order to observe the fluctuation caused by the selection matrices, we plot the MSE in~\eqref{MSE}  and  $\text{MSE}^\prime=\frac{1}{L}\lim_{k \rightarrow \infty}\sum_{i=1}^L \text{tr}(\mathbf{P}_i(k))$ with $\mathbf{P}_i(k)$ in \eqref{Pi_BRDKF}.\footnote{The difference between $\text{MSE}^\prime$ and MSE is that the $\text{MSE}^\prime$ does not include the statistical expectation with respect to the selection matrices.} Thus, we can examine the accuracy of our theoretical finding to compute the expected value of the error covariance with respect to the selection matrices in \eqref{f21}. The comparable performance of the MSE and $\text{MSE}^\prime$ in Fig.~\ref{Fig_3} demonstrates that simulation results match theoretical findings.

\begin{figure}[!t]
	\centering
	\includegraphics[width=.5\textwidth]{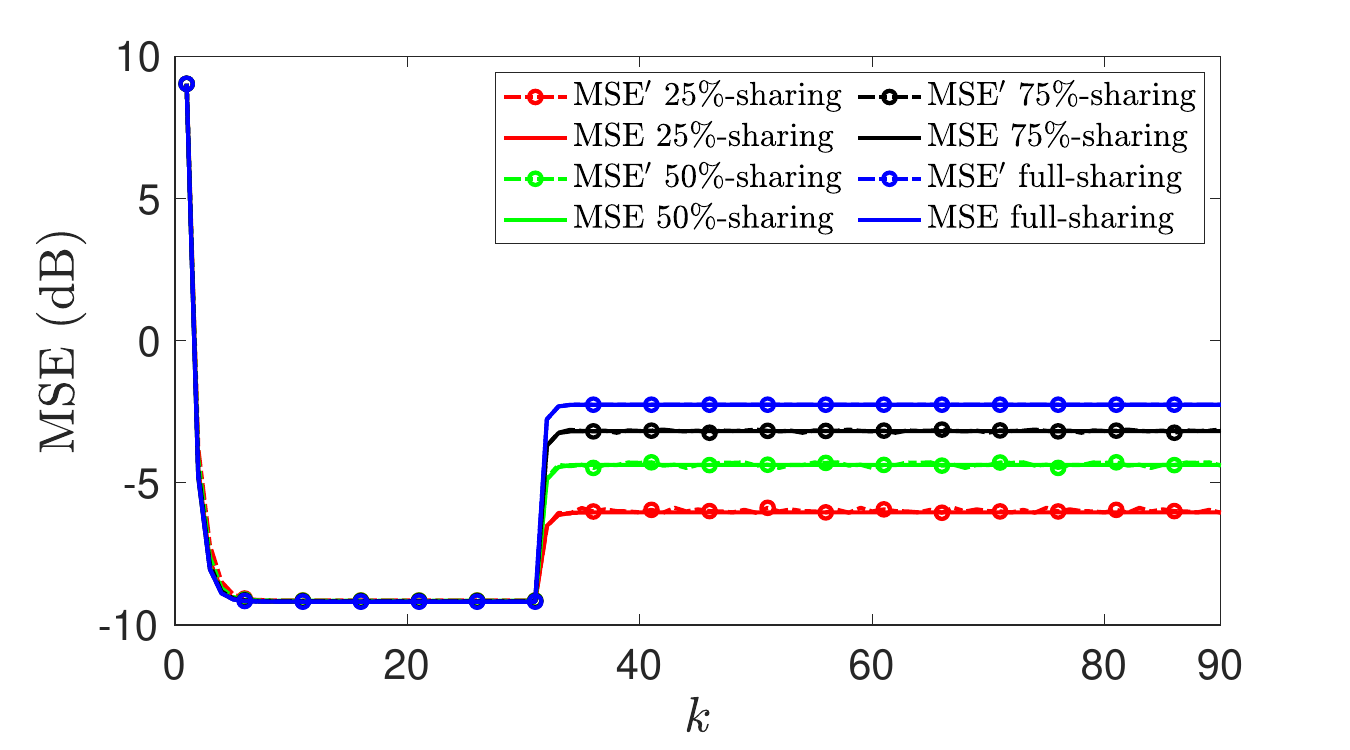}
	\caption{MSE and $\text{MSE}^\prime$ versus  time index $k$.}
	\label{Fig_3}
\end{figure}

\begin{figure}[!t]
	\centering
	\includegraphics[width=.5\textwidth]{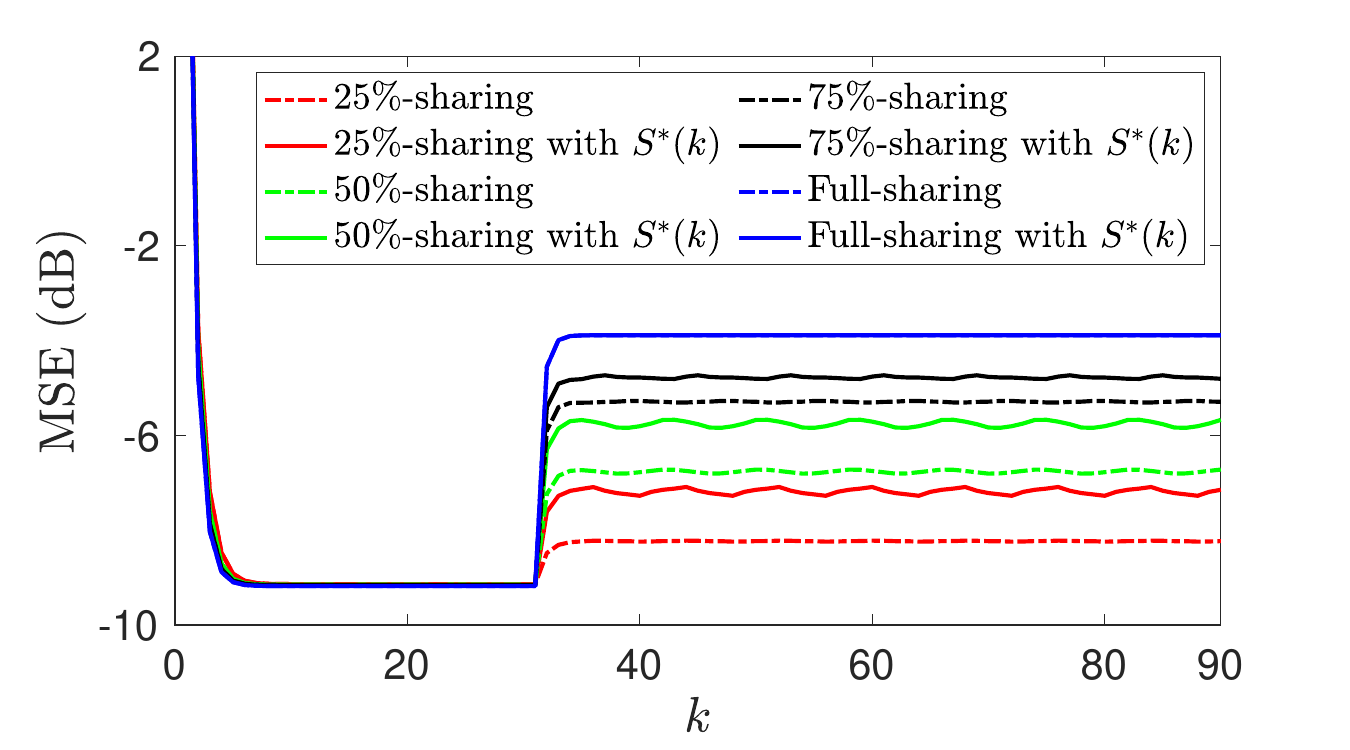}
	\caption{MSE versus  time index $k$ for cases of optimized selection matrix $\mathbf{S}^*(k)$ and random  selection matrix $\mathbf{S}(k)$.}
	\label{Fig_5}
\end{figure}

To solve the optimization problem $\mathcal{P}$ in~\eqref{BCD_opt}, we performed the simulation by the BCD algorithm with $T=10$ iterations and designed the selection matrices $\{\mathbf{S}_j(k_0)\}_{j \in \mathcal{B}}$  at $k_0=30$. We can see from Fig.~\ref{Fig_5} that the designed selection matrix increases the network MSE. Also, it can be seen that designing the selection matrices has a higher impact on the network performance when a smaller fraction is shared. 

\begin{figure}[!t]
	\centering
	\includegraphics[width=.5\textwidth]{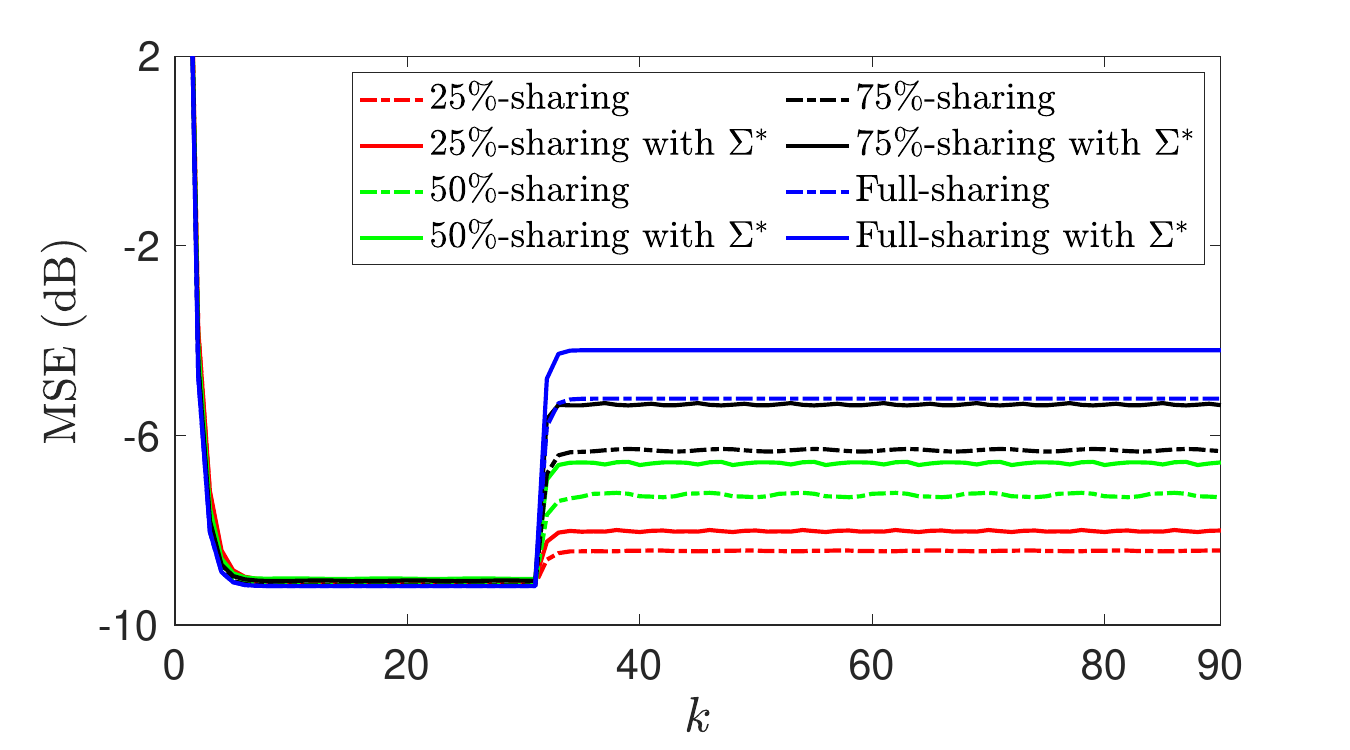}
	\caption{MSE versus  time index $k$ for cases of optimized attack covariance $\mathbf{\Sigma}^*$ and random  attack covariance $\mathbf{\Sigma}$.}
	\label{Fig_6}
\end{figure}

\begin{figure*}[!t] 
\centering
\subfloat{\includegraphics[width=.265 \linewidth]{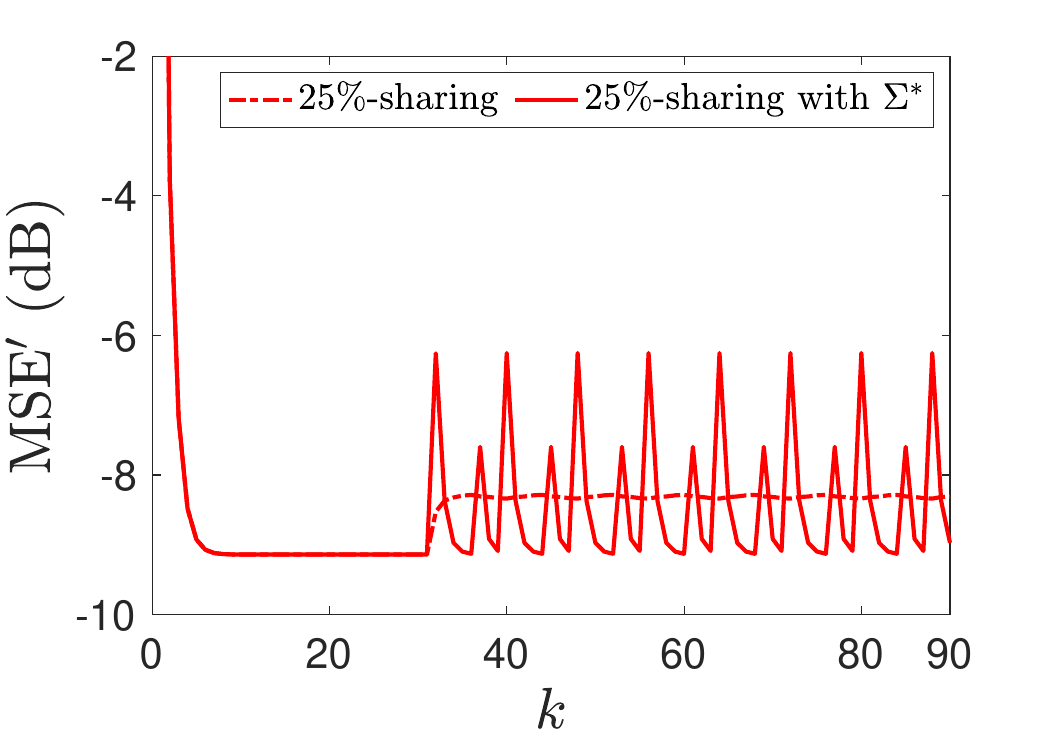}}\hspace{-4.8mm}
\subfloat{\includegraphics[width=.265 \linewidth]{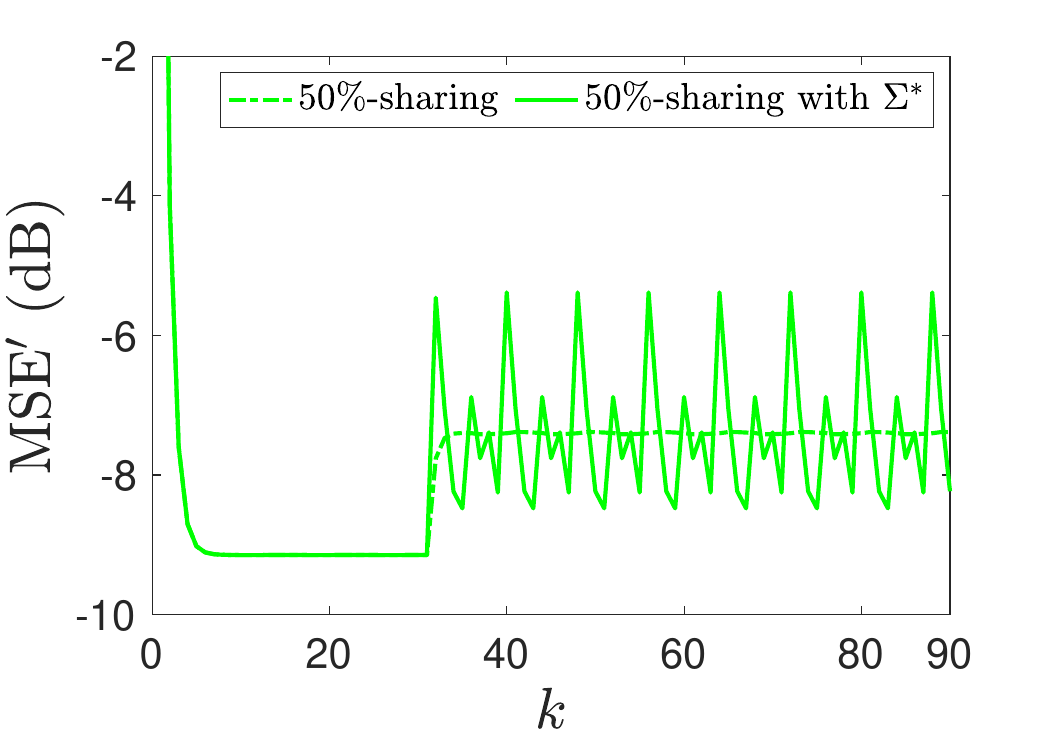}}\hspace{-4.8mm}
\subfloat{\includegraphics[width=.265 \linewidth]{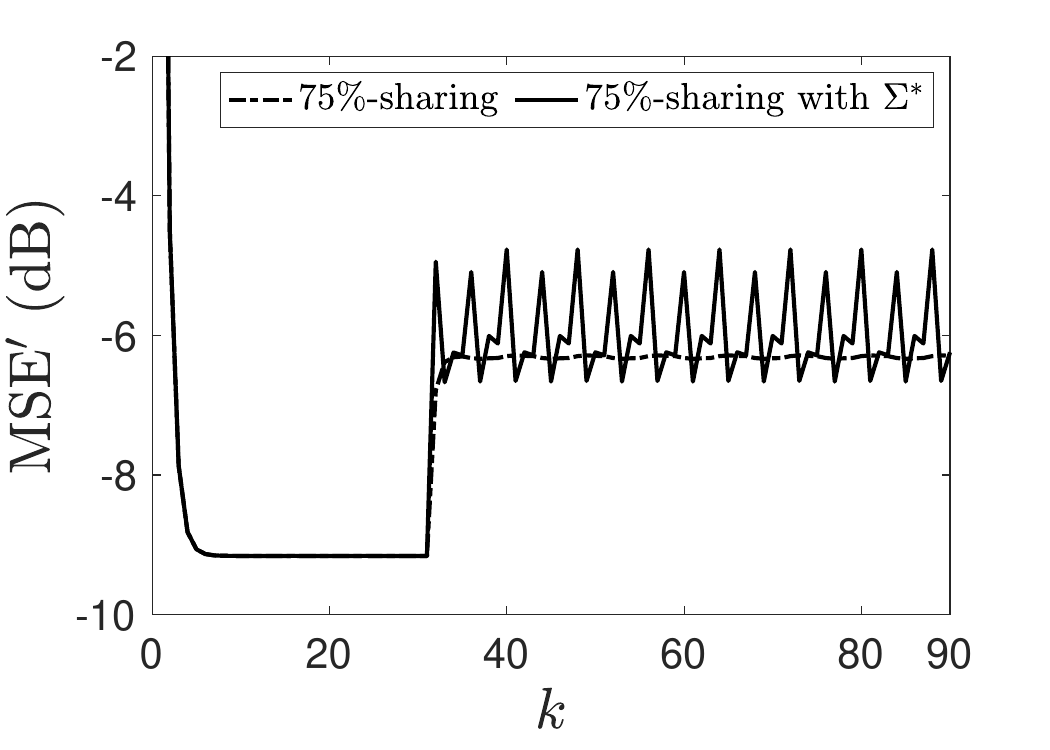}}\hspace{-4.8mm}
\subfloat{\includegraphics[width=.265 \linewidth]{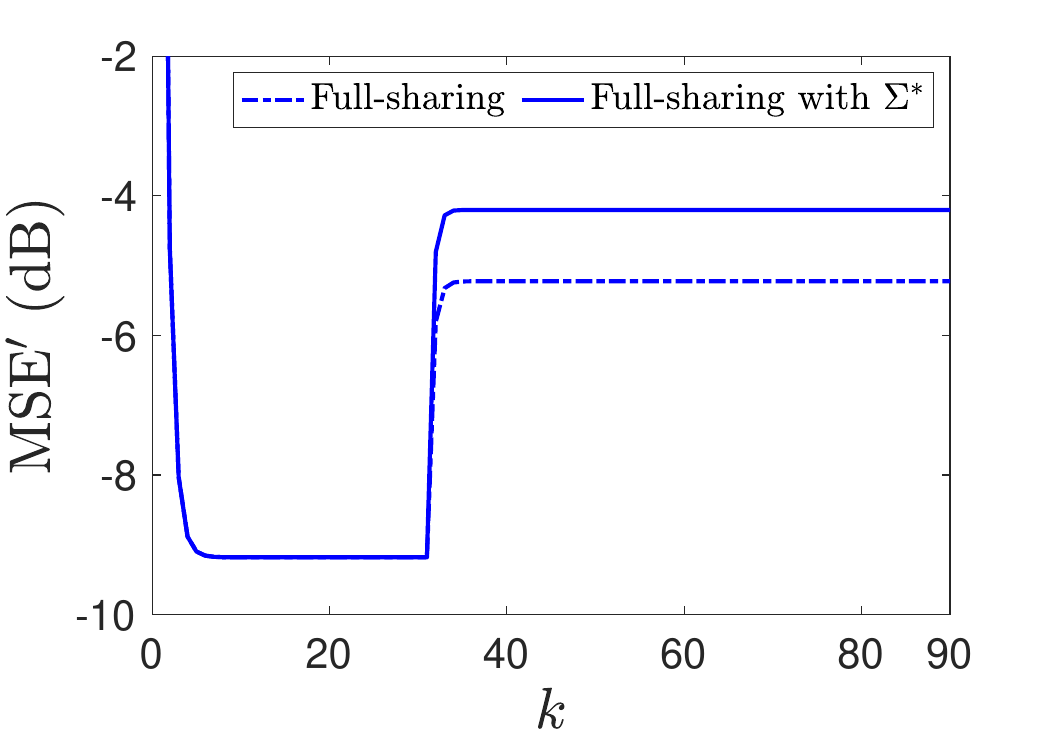}}\hspace{-4.8mm}
\caption{$\text{MSE}^\prime$ versus  time index $k$ for cases of optimized attack covariance  $\mathbf{\Sigma}^*$ and random  attack covariance $\mathbf{\Sigma}$.}
\label{Fig_7}
\end{figure*}

By solving the optimization problem in~\eqref{OPT}, we examine the impact of optimizing the attack covariance compared to a random attack covariance. To this end, we fixed the constraint on the energy of the perturbation sequences, i.e., $\eta$. Fig.~\ref{Fig_6} shows that optimizing the perturbation covariance $\mathbf{\Sigma}^*$ increases the MSE, while using partial sharing of information enhances robustness to Byzantine attacks by restricting the growth in MSE. In other words, as we share more information with neighbors, the impact of optimizing the perturbation covariance matrix increases.

\begin{figure}[!t]
	\centering
	\includegraphics[width=.5\textwidth]{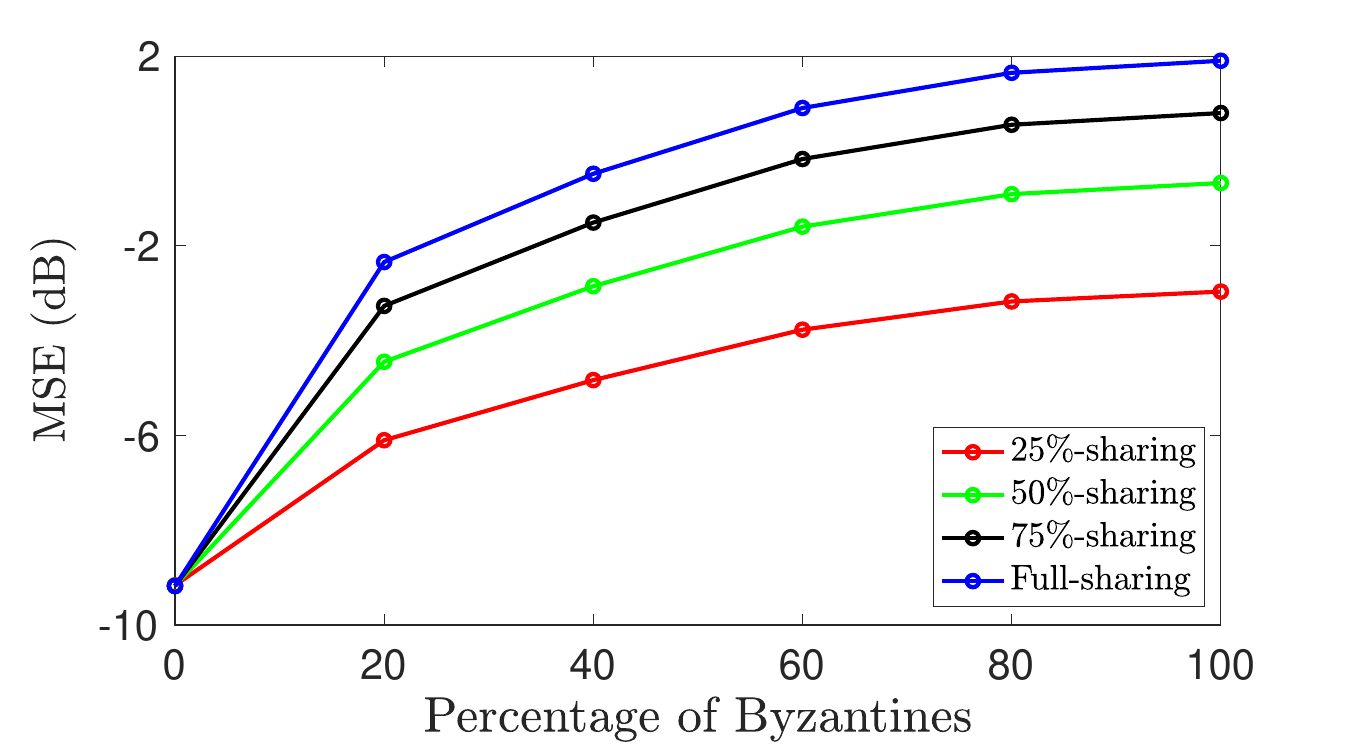}
	\caption{MSE versus percentage of the Byzantine agents in the network.}
	\label{Fig_8}
\end{figure}

For different values of $l$, Fig.~\ref{Fig_7} plots the $\text{MSE}^\prime$ versus time index $k$ for optimized and random selection of the attack covariance.  It can be seen that when less information is shared, the sensitivity to perturbation sequences with optimized covariance increases, resulting in high levels of fluctuation in the $\text{MSE}^\prime$. In addition, Figs.~\ref{Fig_5}~and~\ref{Fig_6} show that the optimized selection matrices have a greater impact when less information is shared, e.g., $25\%$ and $50\%$-sharing, while optimal attack covariance has a higher impact when larger fractions of information are shared, e.g., $75\%$ and full-sharing.

\begin{figure}[!t]
	\centering
	\includegraphics[width=.5\textwidth]{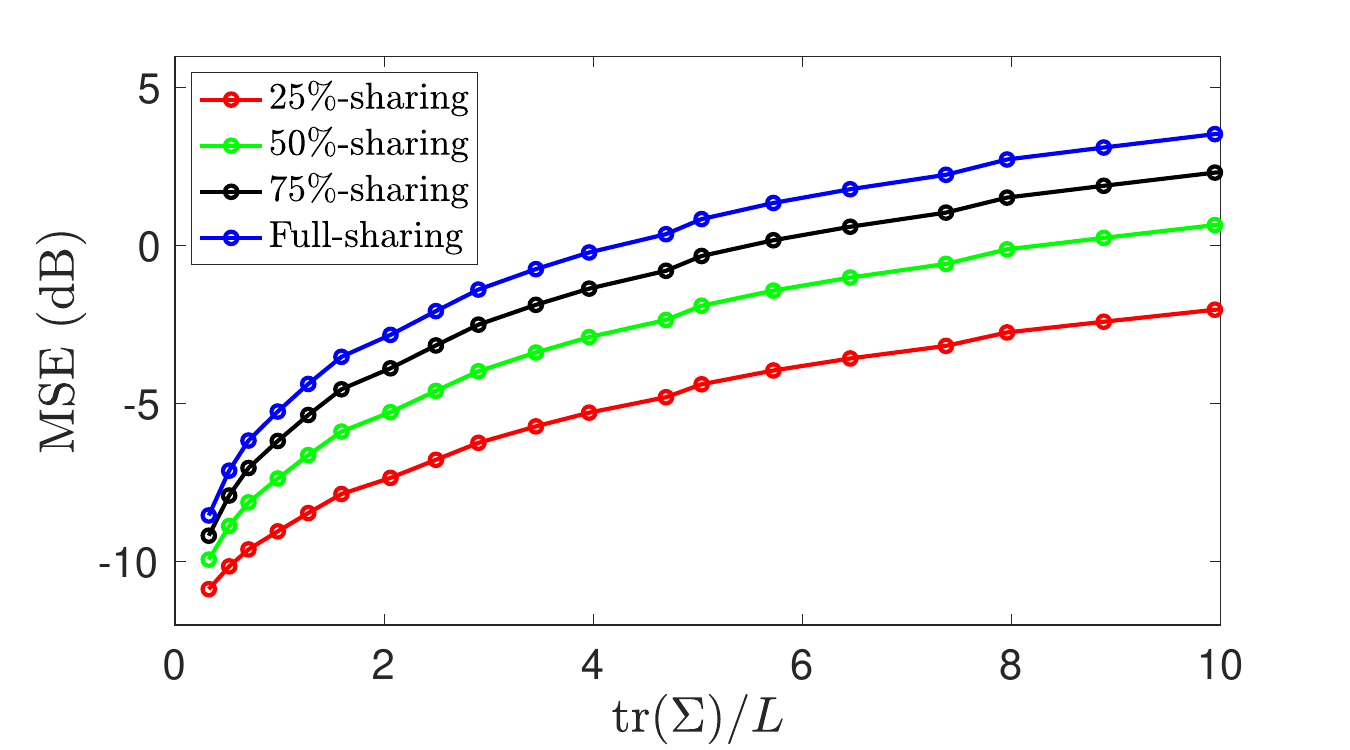}
	\caption{MSE versus  trace of the attack covariance, i.e., $\text{tr}(\mathbf{\Sigma})/L$.}
	\label{Fig_9}
\end{figure}

In order to analyze the robustness of the proposed BR-CDF algorithm to the number of Byzantine agents, Fig.~\ref{Fig_8}  plots the  MSE versus the percentage of Byzantine agents in the network. As expected, we see that as the percentage of Byzantine agents increases, the MSE grows; however,  partial sharing of information can significantly improve the resilience to Byzantine attacks, as illustrated by obtaining the lower MSE. In addition, Fig.~\ref{Fig_9} illustrates the MSE  versus the trace of the attack covariance in order to assess the robustness of the  BR-CDF algorithm to perturbation sequences. It can be seen that partial sharing of information improves robustness to injected noise by obtaining lower MSE.

\section{Conclusion}
This paper proposed a Byzantine-resilient consensus-based distributed filter (BR-CDF) that allows agents to exchange a fraction of their information at each time instant. We characterized the performance and convergence of the BR-CDF and investigated the impact of coordinated data falsification attacks. We showed that partial sharing of information provides robustness against Byzantine attacks and also reduces the communication load among agents by sharing a smaller fraction of the states at each time instant. Furthermore, we analyzed the worst-case scenario of a data falsification attack where Byzantine agents cooperate on designing the covariance of their falsification data or the sequence of their shared fractions. Finally, the numerical results verified the robustness of the proposed BR-CDF against Byzantine attacks and corroborated the theoretical findings.

\bibliographystyle{IEEEtran}
\bibliography{Ref.bib}

\end{document}